\newtheorem{lemma}{Lemma}
\newtheorem{Theorem}{Theorem}
\newtheorem{Corollary}{Corollary}
\newtheorem{proposition}{Proposition}
\newtheorem{Definition}{Definition}
\newcommand*\diff{\mathop{}\!\mathrm{d}}
\newcommand{\defeq}{\vcentcolon=}
\title{A derivation of the Liouville equation for hard particle dynamics with non-conservative interactions}
\author[1]{B. D. Goddard}
\author[1]{T. D. Hurst\thanks{Corresponding Author: \href{mailto: th19@hw.ac.uk}{t.hurst@sms.ed.ac.uk}.}}
\author[2]{M. Wilkinson}
\affil[1]{School of Mathematics and the Maxwell Institute for Mathematical Sciences, University of
	Edinburgh, Edinburgh, UK, EH9 3FD}
\affil[2]{School of Mathematics, Heriot-Watt University, Edinburgh, UK, EH14
	4AS}
\begin{document}
	\maketitle
	
	\begin{abstract}
		The Liouville equation is of fundamental importance in the derivation of continuum models for physical systems which are approximated by interacting particles. However, when particles undergo instantaneous interactions such as collisions, the derivation of the Liouville equation must be adapted to exclude non-physical particle positions, and include the effect of instantaneous interactions. We present the weak formulation of the Liouville equation for interacting particles with general particle dynamics and interactions, and discuss the results using an example.
	\end{abstract}
	\section{Introduction}\label{section:Intro}
	
	Many physical systems can be interpreted as a collection of interacting particles, for example interactions in and between molecules \cite{Schlick2010}, colloidal systems \cite{Goddard2012}, or systems of granular media \cite{Lun1984,Marini2007,Radjai2011}. However, when considering a large number of particles, simulating such a system as a discrete set of particles quickly becomes computationally intractable. In these cases, it is necessary to consider a continuous approximation of the system. One of the most popular first steps to a valid continuous model is the Liouville equation (when particle dynamics are deterministic) or the Kramers equation \cite{Risken1996} (when particle dynamics are stochastic).
	
	We assume that a system of $N$ particles in $d$ dimensions with positions $X(t)\in\mathbb{R}^{dN}$ and velocities $V(t)\in\mathbb{R}^{dN}$ at time $t\in\mathbb{R}$ are governed by Newton's equations:
	\begin{align}
	\frac{\diff X(t)}{\diff t}=\frac{V(t)}{m},\quad
	\frac{\diff V(t)}{\diff t}=G(X(t),V(t),t),\label{eq:GeneralNewtons}
	\end{align}
	where $G(X,V,t)$ incorporates external effects such as gravity and friction, and interparticle interactions such as cohesion in granular media or intermolecular forces in molecular dynamics. Under the assumption that the microscopic dynamics are smooth, associated with the microscopic dynamics is the Liouville equation, a partial differential equation which determines the dynamics of the $N$-body distribution function $f^{(N)}(X,V,t)$.
	\begin{align}
	\mathcal{L}[f^{(N)}] \defeq \Bigg[\frac{\partial}{\partial t}+\frac{1}{m}v \cdot \nabla_{X}-\nabla_V\cdot G(X,V,t)\Bigg]f^{(N)}(X,V,t)=0.\label{eq:LiouvilleGeneralNoColl}
	\end{align}
	For point-like particles and sufficiently smooth $G$, \cref{eq:LiouvilleGeneralNoColl} fully describes the evolution of an initial configuration of particles. However, when particles are of finite volume, adjustments have to be made to the microscopic dynamics to avoid non-physical particle overlap. Under the assumption that particles are spherical and of radius $\varepsilon>0$, and interact through pairwise collisions, a binary collision rule can be introduced, which instantaneously changes the velocities of two particles so that they are moving away from each other when they come into contact: if at time $t$, particles $i$ and $j$ have positions $x_i,x_j$ and velocities $v_i^{\mathrm{in}},v_j^{\mathrm{in}}$ respectively, such that $(v_i^{\mathrm{in}}-v_j^{\mathrm{in}})\cdot(x_i-x_j)<0$ and $\|x_i-x_j\|=\varepsilon$, velocities are updated using:
	\begin{align}
	v_i^{\text{out}}=v_{i}^{\text{in}}-\frac{x_i-x_j}{\|x_i-x_j\|}\cdot(v_i^{\text{in}}-v_j^{\text{in}})\frac{x_i-x_j}{\|x_i-x_j\|},\nonumber\\
	v_j^{\text{out}}=v_{j}^{\text{in}}+\frac{x_i-x_j}{\|x_i-x_j\|}\cdot(v_i^{\text{in}}-v_j^{\text{in}})\frac{x_i-x_j}{\|x_i-x_j\|},\label{eq:CollisionRule}
	\end{align}
	{\em i.e.} the velocity components in the direction of $x_i-x_j$ are swapped and reflected. 
	
	At an informal level, the collisional effect is not recognised at the level of the Liouville equation, but is derived in the BBGKY hierarchy \cite{Balescu1975, Simonella2011}, for example as a consequence of additional assumptions on an interaction force \cite{Huang1987}. However, by instantaneously changing the velocities of two particles that undergo a collision, a fundamental assumption in the Liouville derivation is no longer valid; the dynamics of an individual particle are no longer smooth. We therefore cannot rely on the Liouville equation and must resort to an alternative formulation to derive an equation for $f^{(N)}(X,V,t)$. A suitable alternative is the {\em weak formulation} of the Liouville equation \cite{Wilkinson2018}.

	In \cite{Wilkinson2018}, a system of $N=2$ spherical particles with diameter $\varepsilon>0$ and $G(X,V,t)=0$ is fully characterised, and, under no additional assumptions, it is shown that
	for all smooth, compactly supported test functions $\Phi$, given initial data $f_0^{(2)}\in C^0(\mathcal{D})\cap L^1(\mathcal{D})$, such that $f_0^{(2)}$ integrates to 1 on the phase space, and is always positive, there exists a unique $f^{(2)}\in C^0((-\infty,\infty),L^1(\mathcal{D}))$ which satisfies
	\begin{gather}
	\int_{\mathcal{P}}
	\int_{\mathbb{R}^6}
	\int_{-\infty}^{\infty}
	\left[
	\frac{\partial\Phi(X,V,t)}{\partial t} + (V\cdot\nabla_X)\Phi(X,V,t)
	\right]f^{(2)}(X,V,t)
	\diff t \diff V \diff X
	\nonumber\\= \nonumber\\
	-
	\int_{\partial\mathcal{P}}
	\int_{\mathbb{R}^6}
	\int_{-\infty}^{\infty}
	\Phi(X,V,t)f^{(2)}(X,V,t)V\cdot\hat{\nu}(X)
	\diff t \diff V \diff \mathcal{H}(X)\label{eq:LiouvilleLinearIntegrals}
	\end{gather}
	and conserves linear and angular momentum, and kinetic energy. Here
	\begin{itemize}
		\item We define the spacial integral on $\mathcal{P} = \{X\in\mathbb{R}^6: \|x_1-x_2\|\ge\varepsilon\}$, as particles cannot overlap,
		\item $\mathcal{D}\in\mathbb{R}^{12}$ represents all possible configurations of positions and velocities, given that two particles cannot overlap ($X\in\mathcal{P}$).
		\item The vector $\hat{\nu}\in\mathbb{R}^6$ is the outward unit normal to the surface $\partial\mathcal{P}$,
		\item $\mathcal{H}(Y)$ is the Hausdorff measure \cite{Evans2015} on $\partial\mathcal{P}$.
	\end{itemize}
	In this case, we say that $f^{(2)}$ is a global-in-time weak solution of the Liouville equation:
	\begin{align}
	\frac{\partial f^{(2)}}{\partial t}  + (V\cdot\nabla_X)f^{(2)} = C[f^{(2)}]\label{eq:LiouvilleLinear}
	\end{align}
	where $C[f^{(2)}]$ is determined in the weak sense against test functions $\Phi\in C_1^1((-\infty,\infty),\mathcal{D})$:
	\begin{align}
	\langle C[f^{(2)}],\Phi\rangle = \int_{\partial\mathcal{P}}\int_{\mathbb{R}^6}f^{(2)}(X,V,t)\Phi(X,V,t)V\cdot\hat{\eta}_Y\diff V\diff \mathcal{H}(Y) \label{eq:LiouvilleLinearCollision}
	\end{align}
	In contrast to other formulations, a collisional term $C[F]$ is derived at the level of the Liouville equation. Furthermore, under the assumptions of molecular chaos, $C[F]$ admits the elastic Boltzmann collision operator \cite{Cercignani1988, Gallagher2012} in the first equation of the weak formulation of the BBGKY hierarchy, agreeing with previous results. Under restrictions on initial data, for example on the particle density of the system, an analogue to \cref{eq:LiouvilleLinear} should also hold in systems with more than 2 particles.
	
	From this point, a clear question to consider is how more complicated dynamics, or other instantaneous interactions between particles, affect the derivation. In this paper, we construct the weak formulation of the Liouville equation for two particles with general free dynamics ({\em i.e.} where there are no instantaneous interactions between particles), and general instantaneous interactions. The microscopic dynamics are first discussed in \Cref{Sec:MicroProperties}, which leads to the derivation of the Liouville equation in \Cref{Sec:LiouvilleFormulation}. This careful examination of the dynamics provides useful insights for mathematical modelling of materials approximated by hard particles, {\em e.g.} granular media. In particular the collision operators (in both position and velocity) constructed at the level of the Liouville equation in the weak formulation should be of interest.
	
	Following this, we consider an example in \Cref{subSec:InelasticExample} where collisions are {\em inelastic} and the free dynamics are affected by external friction and a constant external potential. Upon careful consideration of the weak Liouville equation, and the admissible initial data for the given dynamics, the example leads to a modified collisional term at the level of the BBGKY hierarchy. We discuss our findings and future directions for research in \Cref{Sec:Conclusions}.
	
	\subsection{Set-up}\label{subSec:SetUp}
	
	\subsubsection{Free particle dynamics}\label{subsubSec:FreeDynamics}
	At a microscopic level, we consider the initial data of two particles. The first particle has initial position $x\in\mathbb{R}^3$ and velocity $v\in\mathbb{R}^3$, the second has initial position $\bar{x}\in\mathbb{R}^3$ and velocity $\bar{v}\in\mathbb{R}^3$, all defined at time $t=0$. When considering the initial position of particles, we refer to the position of their centres of mass. It is useful to consider the concatenation of position and velocities as $X=[x,\bar{x}]\in\mathbb{R}^6$ and $V=[v,\bar{v}]\in\mathbb{R}^6$. In some cases it may be useful to define $Z=[X,V]\in\mathbb{R}^{12}$ in the same spirit.
	
	One of the important requirements of our method is an understanding of admissible initial data, {\em i.e.} initial data that produces a solution to \cref{eq:GeneralNewtons} for all time $t\in\mathbb{R}$. In this paper, we assume that the admissible initial position data for the free dynamics encompasses the entirety of $\mathbb{R}^6$, {\em i.e.} any initial positions can provide dynamics that are defined for all times $t\in\mathbb{R}$. However, the initial position data $X\in\mathbb{R}^6$ may restrict the admissible initial velocity data to a subset of $\mathbb{R}^6$. Therefore for each $X\in\mathbb{R}^6$, we define $\mathcal{V}^{\mathrm{f}}(X)\subseteq\mathbb{R}^6$ to be the set of initial velocity data which produces a solution to \cref{eq:GeneralNewtons} for all times $t\in\mathbb{R}$, and take $V\in\mathcal{V}^{\mathrm{f}}(X)$.
	
	Many calculations in this derivation refer to relative differences of position and velocity of the two particles, for example the binary collision rule \cref{eq:CollisionRule} is used in the dynamics when $(x-\bar{x})\cdot(v-\bar{v})<0$. Thus we introduce the following notation: for a vector $A=[a,\bar{a}]\in\mathbb{R}^6$, we write $\tilde{a} = a-\bar{a}\in\mathbb{R}^3$ as its relative difference. Much of the intuition in the derivation can be considered in terms of relative differences of particle data. For example, for the linear dynamics where $G(X,V,t)=0$ where particles are hard spheres with diameter $\varepsilon$, we can rewrite the dynamics in terms of relative differences, which effectively fixes one point at position $0$ with radius $\varepsilon$, and a collision is seen as a reflection of an intersecting point-particle trajectory.
	
	%
	
	Given initial data $X\in\mathbb{R}^6$ and $V\in\mathcal{V}^{\mathrm{f}}(X)$, we define the free particle flow maps for position and velocity by $\Phi_t^x(X,V)$ and $\Phi_t^v(X,V)$ respectively, and assume that they satisfy the Newton equations
	\begin{align}
	\partial_t\Phi_t^x(X,V,t) = \Phi_t^v(X,V,t),\quad \partial_t\Phi_t^v(X,V,t) = G(\Phi_t^x(X,V,t),\Phi_t^v(X,V,t),t).
	\end{align}
	We write $\Phi_t = [\Phi_t^x,\Phi_t^v]$ as the flow map for initial data $Z\in\mathcal{D}$. We also assume that the flow maps produce unique trajectories for any given initial data.
	
	\subsubsection{Dynamics with instantaneous interactions}\label{subsubSec:Dynamics}
	When instantaneous interactions are considered, a careful understanding of the admissible data is required. Outside of the discrete set of interaction times the particles follow trajectories determined by $\Phi_t^x$ and $\Phi_t^v$. In this section we will construct the flow maps $\Psi_t^x$ and $\Psi_t^v$ that include instantaneous interactions.
	
	\paragraph{Admissible data}
	In both of the examples we consider, the particles are hard spheres with diameter $\varepsilon$. The possible initial data for $X$ is therefore restricted to the {\em hard sphere table}
	\begin{align}
	\mathcal{P}_\varepsilon = \{X = [x,\bar{x}]\in\mathbb{R}^6 : \|x-\bar{x}\|\ge\varepsilon\}.
	\end{align}
	The subspace $\mathcal{P}_\varepsilon$ is a real analytic manifold with boundary $\partial\mathcal{P}_\varepsilon$, which has unit norm vector
	\begin{align}
	\hat{\nu} = \frac{1}{\sqrt{2}\varepsilon}\left[
	-\tilde{x},\tilde{x}
	\right]\label{eq:BoundaryNormal}
	\end{align}
	for $X\in\partial\mathcal{P}_\varepsilon$.
	
	We assume that the motion of the spherical particles is non-rotational, i.e. we do not furnish the equations of motion with an evolution equation for the angular velocity of the spheres. Adopting the assumption of smooth spherical particles is very popular in the literature, the introduction of particles which are non-spherical is also of interest and has been studied, for example computationally in \cite{Donev2005}.
	
	The introduction of instantaneous interactions between particles will change what initial velocities are admissible. We define the set of admissible velocity data for dynamics with instantaneous interactions as $\mathcal{V}(X)$ for each $X\in\mathcal{P}$, which we assume has a piecewise analytic boundary. Furthermore, we define $\mathcal{C}(X)\subset\mathcal{V}(X)$ to be the set of initial data which leads to instantaneous interactions, and also assume that $\partial\mathcal{C}(X)$ is a piecewise analytic submanifold of $\mathbb{R}^6$. We validate this in the two examples considered.
	
	We may consider additional interaction diameters $\mathcal{P}_{\tilde{\varepsilon}}$ for $\tilde{\varepsilon}>\varepsilon$, {\em e.g.} square well interactions discussed in \cite{Bannerman2010}, but importantly for $X\in\mathbb{R}^6\backslash\mathcal{P}_\varepsilon$, $\mathcal{V}(X)=\emptyset$.
	
	We assume that the dynamics with instantaneous interactions have flow maps $\Psi_t^x(X,V)$ and $\Psi_t^v(X,V)$, defined globally in time. 
	
	\paragraph{Event times}
	We will characterise each instantaneous event by an event time, an interaction diameter and an event map that changes the particle velocities. As interactions are instantaneous, the event times can be enumerated as a discrete set, which can be finite or infinite. We write the event times as $\tau_i\in\mathbb{R}$ for $i=-M,-M+1,...,-1,0,1,...,N-1,N$, where $M=M(X,V),N=N(X,V)\in\mathbb{N}\cup\{\infty\}$, and
	\begin{align}
	-\infty=\tau_{-M}(X,V)<\tau_{-M+1}(X,V)<....<\tau_{N-1}(X,V)<\tau_N(X,V)=\infty,
	\end{align}
	where we choose $\tau_0$ to be the closest event time to time $t=0$. These events occur when particles reach an interaction diameter. As particles may have many interaction diameters, to each event time $\tau_i$ we associate a particular interaction diameter $\varepsilon_i$ for each $i=1,...,N-1$, and let
	\begin{align}
	\partial\mathcal{P}_i = \{X\in\mathcal{P}:\|\tilde{x}\|=\varepsilon_i\}.
	\end{align}
	Then each $\tau_i$ can be defined as the first time the particles reach an interaction diameter after the previous event.
	\begin{align}
	\tau_i(X,V) = \{s\in(\tau_{i-1},\tau_{i+1}):\Phi_{s-\tau_{i-1}}(\Psi_{\tau_{i-1}}^x,\Psi_{\tau_{i-1}}^v)\in\partial\mathcal{P}_i\}.
	\end{align}
	We will assume that there exists $\delta>0$ such that for all $i=1,...,N$, $\tau_i-\tau_{i-1}>\delta$, and we have defined two special event times $\tau_0(X,V)=-\infty$, $\tau_N(X,V)=\infty$ as the behaviour as $t\rightarrow\infty$. Between each pair of event times, the particle dynamics are determined by \cref{eq:GeneralNewtons}. At each time $\tau_i$, the particles experience an instantaneous change in velocity. For hard spheres, for example, this will ensure that the two particles do not overlap.
	
	\paragraph{Event maps}
	At each time $\tau_i$, the velocities of the particles experience an instantaneous change. It is necessary for the change in velocities to conserve linear and angular momentum, {\em i.e.} if an event $\tau_i(X,V), X\in\mathcal{P},V\in\mathcal{V}(X)$ occurs at time $t=0$ (without loss of generality), then
	\begin{align}
	v' + \bar{v}' = v + \bar{v}, \label{eq:COLM}
	\end{align}
	and for all $a\in\mathbb{R}^3$,
	\begin{align}
	(x-a)\times v' + (\bar{x}-a)\times \bar{v}' = (x-a)\times v + (\bar{x}-a)\times \bar{v}, \label{eq:COAM}
	\end{align}
	where primed velocities denote post event velocities. In fact, to show conservation of angular momentum one only needs to check \cref{eq:COAM} is satisfied for 4 values of $a$.
	
	\begin{proposition}\label{prop:Polytope}
		\cref{eq:COAM} and \cref{eq:COLM} are true for all $a\in\mathbb{R}^3$ if and only if 
		\begin{align}
		(x-p_i)\times v' + (\bar{x}-p_i)\times \bar{v}' = (x-p_i)\times v + (\bar{x}-p_i)\times \bar{v}, \label{eq:COAMPolytope}
		\end{align}
		for $i=1,...,4$, where $\{p_i\}_{i=1}^4\subset\mathbb{R}^3$ are the vertices of a (non-degenerate) polytope in $\mathbb{R}^3$.
	\end{proposition}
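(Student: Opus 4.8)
The plan is to recast both conservation laws in terms of the velocity jumps $\Delta v \defeq v' - v$ and $\Delta\bar{v}\defeq \bar{v}'-\bar{v}$, and then to exploit the fact that the angular-momentum balance is an \emph{affine} function of the reference point $a$. Concretely, I would move the right-hand side of \cref{eq:COAM} to the left and define
\begin{align}
F(a) \defeq (x-a)\times\Delta v + (\bar{x}-a)\times\Delta\bar{v}.
\end{align}
A one-line expansion then gives the key structural identity $F(a) = F(0) - a\times(\Delta v + \Delta\bar{v})$. In this language \cref{eq:COAM} holding for all $a$ is equivalent to $F\equiv 0$, while \cref{eq:COLM} is equivalent to the vanishing of the linear coefficient $\Delta v + \Delta\bar{v}$, i.e.\ to the linear part of the affine map $F$ being trivial.

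The forward implication is then immediate: if \cref{eq:COLM} and \cref{eq:COAM} hold for every $a$, then in particular $F(p_i)=0$ for the four vertices $i=1,\dots,4$, which is exactly \cref{eq:COAMPolytope}.

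For the converse I would assume $F(p_i)=0$ at the four vertices and write $w\defeq \Delta v+\Delta\bar{v}$ and $c\defeq F(0)$, so that the affine identity reads $c = p_i\times w$ for each $i$. Subtracting these in pairs yields $(p_i-p_j)\times w = 0$ for all $i,j$, so $w$ is parallel to every edge vector of the polytope. Fixing a base vertex $p_1$ and invoking non-degeneracy, the three difference vectors $p_2-p_1,\,p_3-p_1,\,p_4-p_1$ are linearly independent and hence span $\mathbb{R}^3$; since a single nonzero vector cannot be parallel to three linearly independent vectors, this forces $w=0$, which is \cref{eq:COLM}. Substituting $w=0$ back into the affine identity gives $F(a)=c=F(p_1)=0$ for all $a$, recovering \cref{eq:COAM}.

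I expect the only genuinely delicate point to be pinning down precisely what ``non-degenerate polytope'' must mean for the argument to close: one needs the four vertices to be affinely independent (equivalently, not coplanar), so that their difference vectors span $\mathbb{R}^3$. This is exactly the hypothesis that forces $w=0$; were the vertices coplanar, the direction normal to that plane would survive and the linear-momentum jump $w$ would remain undetermined, breaking the equivalence. Everything else reduces to routine vector-algebra manipulation of the cross product.
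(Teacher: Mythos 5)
Your proof is correct, and it takes a genuinely different---and in one respect stronger---route than the paper's. The paper proves sufficiency by invoking \cref{eq:COLM} at the outset (``by \cref{eq:COLM} we may assume $x=0$ without loss of generality'') and then taking affine combinations of the four vertex identities: multiplying \cref{eq:COAMPolytope} at $p_j$ by weights $c_j$ with $\sum_j c_j=1$ and summing, linearity of the cross product in the reference point yields \cref{eq:COAM} at every point of the affine hull of $\{p_1,\dots,p_4\}$, which is all of $\mathbb{R}^3$ by non-degeneracy (the paper attributes this last step to convexity of the set $\mathcal{C}$, but what is really being used is that $\mathcal{C}$ is the affine hull of four affinely independent points). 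Your decomposition $F(a)=F(0)-a\times w$, with $w=\Delta v+\Delta\bar{v}$, instead splits the angular-momentum defect into its constant and linear parts and shows both vanish. The key difference: since \cref{eq:COLM} sits on the left of the ``if and only if'', a complete sufficiency argument must \emph{derive} it from the four vertex identities rather than assume it. Your proof does exactly this ($w=0$ follows from $(p_i-p_j)\times w=0$ and linear independence of the edge vectors), whereas the paper's proof treats \cref{eq:COLM} as a standing hypothesis in the sufficiency direction and never recovers it. In this sense your argument proves the proposition as literally stated and is the tighter one.

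One correction to your closing remark on sharpness: affine independence is sufficient for the argument but not necessary, and coplanarity is not where the equivalence breaks. If the four vertices were coplanar but not collinear, their difference vectors would still span a two-dimensional subspace, and a nonzero $w$ cannot be parallel to two linearly independent vectors; so $w=0$, and then $F\equiv F(p_1)=0$, still follow (indeed three non-collinear points already suffice). The equivalence genuinely fails only when all the $p_i$ are collinear: for example with $\bar{x}=0$, $x=e_1$, $p_i=t_ie_3$, the jumps $\Delta v=0$, $\Delta\bar{v}=e_3$ give $F(a)=-a\times e_3$, which vanishes at every $p_i$ on the line yet violates \cref{eq:COLM}. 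So the ``non-degenerate polytope'' hypothesis could in fact be weakened to ``not all vertices collinear''; as stated it of course covers the tetrahedron case used later in the proof of \Cref{Theorem:sigma}.
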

	\begin{proof}
		The necessity is a trivial result. For sufficiency, by \cref{eq:COLM} we may assume $x=0$ without loss of generality, and so for each $p_j$, for all constants $c_j\in\mathbb{R}$
		\begin{align*}
		&-c_jp_j\times v' + c_j(\bar{x}-p_j)\times\bar{v}'
		=
		-c_jp_j\times v + c_j(\bar{x}-p_j)\times\bar{v},\\
		\implies&
		-\sum_{j=1}^4c_jp_j\times v' + \sum_{j=1}^4c_j(\bar{x}-p_j)\times\bar{v}'
		=
		-\sum_{j=1}^4c_jp_j\times v +\sum_{j=1}^4 c_j(\bar{x}-p_j)\times\bar{v}.
		\end{align*}
		If we now suppose that $\sum_{j=1}^4c_j=1$, then we have that \cref{eq:COAM} is satisfied for all $q$ in the set
		\begin{align}
		\mathcal{C} = \left\{\sum_{i=1}^4c_jp_j: \sum_{j=1}^4c_j=1\right\}
		\end{align}
		As $\mathcal{C}$ is a convex set in $\mathbb{R}^3$, we infer that $\mathcal{C}=\mathbb{R}^3$, as required.
	\end{proof}
	Considering $\mathbb{R}^6$ notation, the change in velocity is determined by a map $\sigma_i(X,V):\mathbb{R}^6\rightarrow\mathbb{R}^6$, where $\sigma_i(X,V)\in\mathbb{R}^6\times\mathbb{R}^6$. We have the following result on the form of $\sigma(X,V)$.
	
	\begin{Theorem}\label{Theorem:sigma}
		Let $X\in\mathbb{R}^6$ and $V\in\mathbb{R}^6$, and set $X=[x,\bar{x}],V=[v,\bar{v}]$. Assume that	$\bar{x}\neq x$, and set
		\begin{align}
		N(X) = \frac{1}{\|x-\bar{x}\|}[x-\bar{x},\bar{x}-x].
		\end{align}
		Then the following are equivalent:
		\begin{enumerate}
			\item 			
			\begin{align}
			\sigma(X,V) = I - \eta(X,V)N(X)\otimes N(X),\label{sigma}
			\end{align}
			for some $\eta(X,V):\mathbb{R}^{12}\rightarrow\mathbb{R}$.
			
			\item		
			The map $\sigma(X,V)$ satisfies linear and angular momentum:
			\begin{enumerate}
				\item (COLM)
				\begin{align}
				v' + \bar{v}' = v + \bar{v},\label{COLM}
				\end{align}
				\item (COAM)
				For any $a\in\mathbb{R}^3$
				\begin{align}
				(x-a)\times v' + (\bar{x}-a)\times \bar{v}'
				=
				(x-a)\times v + (\bar{x}-a)\times \bar{v}\label{COAM}
				\end{align}
				where $v'=(\sigma(X,V)V)_{(1,2,3)}$ and similar for $\bar{v}'$.
			\end{enumerate}
		\end{enumerate}
	\end{Theorem}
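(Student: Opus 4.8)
The plan is to reduce both statements to conditions on the velocity jump $W \defeq V' - V = [w,\bar{w}]$, where $w = v'-v$ and $\bar{w} = \bar{v}'-\bar{v}$, and then to match the two descriptions of $W$. Throughout I write $N(X) = \|\tilde{x}\|^{-1}[\tilde{x},-\tilde{x}]$ and note the useful identity $N(X)\cdot V = \tilde{x}\cdot\tilde{v}/\|\tilde{x}\|$.

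For the implication $1\Rightarrow 2$ I would argue by direct computation. Applying $\sigma(X,V) = I - \eta\, N\otimes N$ to $V$ gives $w = -\eta\,(N\cdot V)\,\tilde{x}/\|\tilde{x}\|$ and $\bar{w} = +\eta\,(N\cdot V)\,\tilde{x}/\|\tilde{x}\|$, so that $w+\bar{w}=0$ and (COLM) is immediate. For (COAM) I would first observe that once (COLM) holds the terms involving $a$ cancel, so it suffices to check $x\times w + \bar{x}\times\bar{w}=0$; substituting $\bar{w}=-w$ reduces this to $(x-\bar{x})\times w = 0$, which holds because $w$ is parallel to $\tilde{x}=x-\bar{x}$. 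This direction is routine.

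The substance lies in $2\Rightarrow 1$, where I would run the same reduction in reverse. First, (COLM) gives $\bar{w}=-w$ directly. Feeding this into (COAM) and again using that the $a$-dependent part cancels under (COLM) — equivalently, invoking \cref{prop:Polytope} to evaluate (COAM) at the four polytope vertices — the angular-momentum condition collapses to $(x-\bar{x})\times w = 0$. Hence $w$ is a scalar multiple of $\tilde{x}$, say $w=\lambda\tilde{x}$, and since $\bar{w}=-w$ the full jump is $W = [\lambda\tilde{x},-\lambda\tilde{x}] = (\lambda\|\tilde{x}\|)\,N(X)$; that is, $W$ is parallel to $N(X)$. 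It then remains to recast $W=\mu N(X)$ in the form $W=-\eta\,(N\cdot V)\,N(X)$, which is achieved by setting $\eta(X,V) = -\mu/(N(X)\cdot V)$, reproducing exactly the action of $\sigma(X,V)=I-\eta\,N\otimes N$ on $V$.

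The step I expect to be the main obstacle is this last recasting in the degenerate case $N(X)\cdot V = 0$, i.e. when the relative velocity is orthogonal to the line of centres. There the scalar $\eta$ cannot be recovered from $\mu$ by division, and indeed form $1$ forces $W=0$ whereas $2$ permits $W=\mu N(X)$ for any $\mu$. I would therefore need to either restrict attention to the genuine interaction set where $N(X)\cdot V\neq 0$, or argue that a physically admissible event map fixes $V$ whenever $N(X)\cdot V=0$ (so that $W=0$ and $\eta$ may be assigned arbitrarily); pinning down this convention is the delicate point that the clean computations on either side do not by themselves resolve.
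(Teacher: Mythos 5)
Your proposal is correct, and its skeleton---momentum conservation holds if and only if the velocity jump $W=\sigma(X,V)V-V$ is parallel to $N(X)$---matches the paper's, but your mechanism for the key implication $2\Rightarrow 1$ is genuinely different and more elementary. The paper encodes (COAM) as a matrix identity $A_a\sigma(X,V)V=A_aV$, notes that $A_a$ has rank $3$, and determines the common kernel by evaluating the three component equations at four tetrahedron vertices $a_1,\dots,a_4$, solving the resulting linear system to conclude $W=\tilde{\eta}\,[\tilde{x},-\tilde{x}]$. You bypass that computation entirely: once (COLM) gives $\bar{w}=-w$, the $a$-dependent terms in (COAM) cancel because $a\times(w+\bar{w})=0$, so (COAM) for a \emph{single} $a$ already collapses to $(x-\bar{x})\times w=0$, i.e.\ $w\parallel\tilde{x}$; this renders the four-point device of \cref{prop:Polytope} unnecessary for this theorem, whereas the paper leans on it in both directions. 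On the degenerate set $N(X)\cdot V=0$ you are also more careful than the paper: you correctly observe that (COLM) and (COAM) still permit $W=\mu N(X)$ with $\mu\neq 0$, while form $1$ forces $W=0$, so the stated equivalence literally fails there without an additional convention. The paper glosses over this point, asserting that if $N(X)\cdot V=0$ then ``we must have $\sigma(X,V)V=V$,'' which does not follow from the two conservation laws---it is precisely the convention you identify (an admissible event map fixes any velocity with no component along the line of centres); once that convention is adopted, the paper's factorization $\eta(X,V)=(N(X)\cdot V)\,\tilde{\tilde{\eta}}(X,V)$ and your assignment $\eta=-\mu/(N(X)\cdot V)$ away from the degenerate set agree, and both arguments close.
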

	\begin{proof}
		We note that \cref{COAM} can be written as $A\sigma(X,V)V=AV$, where
		\begin{align}
		A_a=\begin{pmatrix}
		0      & -x_3^a & x_2^a  & 0            & -\bar{x}_3^a & \bar{x}_2    \\
		x_3^a  & 0      & -x_1^a & \bar{x}_3^a  & 0            & -\bar{x}_1^a \\
		-x_2^a & x_1^a  & 0      & -\bar{x}_2^a & \bar{x}_1^a  & 0      
		\end{pmatrix},
		\end{align}
		where we have written $x_i^a = x_i-a$.
		\begin{itemize}
			\item (1 $\implies$ 2). This can be shown by a direct calculation. In short, we see that
			\begin{align*}
			A+a\sigma(X,V)&=A_a(I - \eta(X,V)N\otimes N),\\
			&= A_a - \eta(X,V)A_aN(X)\otimes N(X),\\
			&= A_a,
			\end{align*}
			where we have found that $A_aN(X)=(0,0,0)$.
			\item (2 $\implies$ 1). Before starting the calculation, we note that the matrix $A_a$ is rank 3 (by considering its row-echelon form), and that it is enough to show that \cref{COAM} holds for $a=\{a_1,a_2,a_3,a_4\}$, if \cref{COLM} holds.		
			
			We note that $A_a\sigma(X,V)V=A_aV$ for all $V\in\mathbb{R}^6$ implies that $A_a(\sigma(X,V)V-V)=0$. Therefore $\sigma(X,V)V-V\in\mathrm{ker}(A_a)$ for all $a\in\mathbb{R}^6$. Let $Y=(\sigma(X,V)V-V)$, then
			\begin{gather*}
			- Y_2x_3^a + Y_3x_2^a - Y_5\bar{x}_3^a + Y_6\bar{x}_2^a = 0,\\
			Y_1x_3^a - Y_3x_1^a + Y_4\bar{x}_3^a - Y_6\bar{x}_1^a = 0,\\
			- Y_1x_2^a + Y_2x_1^a - Y_4\bar{x}_2^a + Y_5\bar{x}_1^a = 0.
			\end{gather*}
			We consider the values 
			\begin{gather*}
			a_1 = (x_1,x_2,x_3),\\
			a_2 = (x_1,x_2,\bar{x}_3),\\
			a_3 = (x_1,\bar{x}_2,\bar{x}_3),\\
			a_4 = (\bar{x}_1,\bar{x}_2,\bar{x}_3),
			\end{gather*}
			which form the four vertices of a tetrahedron, and result in
			\begin{align}
			Y = \tilde{\eta}(x,v)\begin{pmatrix}
			x_1 - \bar{x}_1\\
			x_2 - \bar{x}_2\\
			x_3 - \bar{x}_3\\
			\bar{x}_1 - x_1\\
			\bar{x}_2 - x_2\\
			\bar{x}_3 - x_3
			\end{pmatrix} = \eta(X,V) N(X).\label{YN(X)},
			\end{align}
			where $\eta(X,V) = \|x-\bar{x}\|\tilde{\eta}(X,V)$, without loss of generality. We note, taking the dot product on both sides of \cref{YN(X)} with $N(X)$ and rearranging, that
			\begin{align*}
			&N(X)\cdot(\sigma(X,V)-V) = \eta(X),\\
			\implies&
			(I-N(X)\otimes N(X))\sigma(X,V)V = (I - N(X)\otimes N(X)) V,
			\end{align*} 
			{\em i.e.} $\sigma(X,V)$ can only change the component of $V$ in the direction of $N(X)$. Thus, if $N(X)\cdot V=0$, then $V$ is contained in the hyperplane orthogonal to $N(X)$, and we must have that $\sigma(X,V)V=V$. Therefore, without loss of generality, for any $V\in\mathbb{R}^6$ we can take $\eta(X,V) = N(X)\cdot V\tilde{\tilde{\eta}}(X,V)$, and so
			\begin{align}
			\sigma(X,V)V = (I-\eta(X,V)N(X)\otimes N(X))V,
			\end{align}
			which is as required.
		\end{itemize}
	\end{proof}
	
	\Cref{eq:COLM,eq:COAM} are not enough to fully determine the map $\sigma(X,V)$; an additional constraint must be supplied. For example, in \cite{Wilkinson2018}, it is shown that the Boltzmann (elastic) scattering map can be determined by including the conservation of kinetic energy. In this case $\sigma(X,V)$ is an involution, that is $\sigma(X,V)^2 = I$. Alternative maps can be derived using different constraints on the Jacobian of the scattering map (for inelastic Boltzmann scattering maps considered in \Cref{subSec:InelasticExample}), or on kinetic energy (for boost or damping maps considered in \cite{Bannerman2010}). We call this the {\em event map constraint}.
	
	We define the forward time map $\sigma^+_i(X,V)$ as the map which takes pre-event velocities to post-event velocities, and the backward time map $\sigma^-_i(X,V)$ taking post-event velocities to pre-event velocities. For the dynamics to be reversible we require $\sigma_i^+(X,V)\sigma_i^-(X,V)=I$. Note that if we assume that $\sigma_i^-(X,V)=\sigma_i^+(X,V)$ then $\eta(X,V)=0$ or $\eta(X,V)=-2$. The former value of $\eta(X,V)$ produces the identity map, while the latter is the elastic Boltzmann scattering map.
	
	We can fully define the flow maps for dynamics with instantaneous interactions, using $\Phi_t^x,\Phi_t^v,\tau_i$ and $\sigma_i^\pm$. We split the initial data into three cases.
	
	\paragraph{No instantaneous events.} If $X,V$ are such that no instantaneous events happen, then $M+N=1$ and $\Psi_t^X,\Psi_t^v$ obey
	\begin{align}
	\partial_t\Psi_t^x(X,V) = \Phi_t^x(X,V),\\
	\partial_t\Psi_t^v(X,V) = G(X,V,t).
	\end{align}
	\paragraph{Instantaneous events.} When $X,V$ are such that instantaneous events occur in finite time, then
	\begin{align}
	\partial_t\Psi_t^x(X,V) = \begin{cases}
	\Phi_t^v(X,V), & \tau_{i_0-1}\le t \le \tau_{i_0},\\
	\Phi_{t-\tau_i}^v(\Psi_{\tau_i}^x(X,V),\sigma_i^+(X,V)\Psi_{\tau_i}^v(X,V), & 
	\begin{aligned}[c]&\tau_i<t\le\tau_{i+1},\\ &i=i_0,\dots,N-1,
	\end{aligned}
	\\
	\Phi_{t-\tau_i}^v(\Psi_{\tau_i}^x(X,V),\sigma_i^-(X,V)\Psi_{\tau_i}^v(X,V), & \begin{aligned}[c]
	&\tau_{i-1}\le t<\tau_{i},\\
	&i=-(M+1),\dots,(i_0-1).
	\end{aligned}
	\\
	\end{cases}
	\end{align}
	and
	\begin{align}
	\partial_t\Psi_t^v(X,V) = \begin{cases}
	G(X,V,t), & \tau_{i_0-1}\le t \le \tau_{i_0},\\
	G(\Psi_{\tau_i}^x(X,V),\sigma_i^+(X,V)\Psi_{\tau_i}^v(X,V),t-\tau_i), & \begin{aligned}[c]&\tau_i<t\le\tau_{i+1},\\& i=i_0,\dots,N-1,
	\end{aligned}
	\\
	G(\Psi_{\tau_i}^x(X,V),\sigma_i^-(X,V)\Psi_{\tau_i}^v(X,V),t-\tau_i), & \begin{aligned}[c]&\tau_{i-1}\le t<\tau_{i},\\& i=-(M+1),\dots,(i_0-1).
	\end{aligned}
	\\
	\end{cases}
	\end{align}
	where $i_0=1$ if $\tau_0<0$ and $\mu=0$ if $\tau_0>0$.
	
	Note that free particle flow maps need not be defined globally to be used in these flow maps, if the instantaneous interaction renders the trajectory admissible.
	
	We now have all the necessary notation to state the main results of this paper.
	\subsection{Main results}\label{subSec:MainResults}
	Now that we have fully defined the dynamics of two particles, we are in a position to state the main results of this paper. To do so, we state the following definition, which is generalised from \cite{Wilkinson2018}.
	
	\begin{Definition}(Global in time weak solutions of the Liouville equation) Suppose we are given an initial condition $f_0\in C^0(\mathcal{D})\cap L^1(\mathcal{D})$, such that
		\begin{align}
		\int_{\mathcal{P}}\int_{\mathcal{V}(X)}f_0(X,V)\diff V\diff X = 1,\quad f_0(X,V)\ge0.
		\end{align}
		Then $f\in C^0((-\infty,\infty),L^1(\mathcal{D}))$ is a physical global in time solution of the Liouville equation
		\begin{align}
		\partial_tf + V\cdot\nabla_Xf +\nabla_V\cdot(G(X,V,t)f) = C_X[f^{(2)}] + C_V[f^{(2)}]
		\end{align}
		if and only if for all test functions $\Phi\in C_c^1(T\mathbb{R}^6\times (-\infty, \infty))$
		\begin{align}
		&\int_{\mathcal{P}}\int_{\mathcal{V}(X)}\int_{-\infty}^{\infty}
		f(X,V,t)
		[\partial_t\Phi(X,V,t) + V\cdot\nabla_X\Phi(X,V,t) \nonumber\\& 
		\qquad \qquad \qquad \qquad \qquad \qquad
		+ \nabla_V\cdot(G(X,V,t)\Phi(X,V,t))]
		\diff t\diff V\diff X \nonumber\\ &
		=\nonumber\\
		&-\int_{\partial\mathcal{P}}\int_{\mathcal{V}(X)}\int_{-\infty}^{\infty}
		f(X,V,t)\Phi(X,V,t)V\cdot\hat{\nu}_X
		\diff t\diff V\diff \mathcal{H}(X)\nonumber\\
		&-\int_{\mathcal{P}}\int_{\partial\mathcal{V}(X)}\int_{-\infty}^{\infty}
		f(X,V,t)\Phi(X,V,t)G(X,V,t)\cdot\hat{\nu}_V
		\diff t\diff \mathcal{H}(X,V)\diff X,\label{eq:LiouvilleGeneral}
		\end{align}
		and $f$ obeys the conservation of linear and angular momentum for all $t\in(-\infty,\infty)$:
		\begin{align}
		&\int_{\mathcal{P}}\int_{\mathcal{V}(X)}(v+\bar{v})f(X,V,t)\diff V\diff x
		=
		\int_{\mathcal{P}}\int_{\mathcal{V}(X)}(v+\bar{v})f_0(X,V)\diff V\diff x,\\\
		&\int_{\mathcal{P}}\int_{\mathcal{V}(X)}(x\times v+\bar{x}\times\bar{v})f(X,V,t)\diff V\diff x
		=
		\int_{\mathcal{P}}\int_{\mathcal{V}(X)}(x\times v+\bar{x}\times\bar{v})f_0(X,V)\diff V\diff x,
		\end{align}
		and the microscopic dynamics satisfy the associated event map constraints. 
	\end{Definition}
	With this, we state the main result of this paper.
	\begin{Theorem}[Existence of weak global in time solutions to the Liouville equation]\label{Theorem:main}
		For any $f_0\in C^0(\mathcal{D})\cap L^1(\mathcal{D})$, there exists a physical global in time weak solution to the Liouville equation \cref{eq:LiouvilleGeneral}.
	\end{Theorem}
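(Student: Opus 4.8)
The plan is to build the solution by the method of characteristics and then verify the weak identity directly; existence then follows because the flow maps $\Psi_t = [\Psi_t^x,\Psi_t^v]$ have already been constructed. Since the event maps obey the reversibility relation $\sigma_i^+\sigma_i^- = I$ and the free flow produces unique trajectories, $\Psi_t$ is, for each fixed $t$, a bijection of $\mathcal{D}$ with inverse $\Psi_{-t}$. I would therefore define $f(\cdot,t)$ to be the pushforward of $f_0$ along $\Psi_t$, specified in duality by
\[
\int_{\mathcal{D}} \phi(Z)\, f(Z,t)\,\diff Z = \int_{\mathcal{D}} \phi(\Psi_t(Z))\, f_0(Z)\,\diff Z \qquad \text{for all } \phi\in C_c^0(\mathcal{D}).
\]
This construction carries the Jacobian information automatically: between events the density accumulates the factor coming from $\nabla_V\cdot G$, the divergence of the free phase-space velocity field $(V,G)$, and at each event it is multiplied by $|\det D\sigma_i^{\pm}|$; the former is precisely what promotes the transport equation to the conservative form $\partial_t f + V\cdot\nabla_X f + \nabla_V\cdot(Gf)=0$ on each collision-free region. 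Because a pushforward preserves total mass and nonnegativity, $f(\cdot,t)\in L^1(\mathcal{D})$ is nonnegative and integrates to $1$ for every $t$, matching the requirements of the Definition.

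First I would verify the regularity $f\in C^0((-\infty,\infty),L^1(\mathcal{D}))$. The crucial structural input is the assumed separation $\tau_i-\tau_{i-1}>\delta$, which guarantees that only finitely many events occur in any bounded time window; on such a window $\Psi_t$ is a finite composition of continuous free flow maps $\Phi_t$ with the fixed event maps $\sigma_i^{\pm}$. Continuity of $t\mapsto f(\cdot,t)$ in $L^1$ then follows from continuity of $\Phi_t$, from the fact that the position component $\Psi_t^x$ does not jump across events (only velocities do), and from dominated convergence, using that the set of initial data whose event time equals a prescribed value is negligible.

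The core of the argument is the weak identity \cref{eq:LiouvilleGeneral}. On the interior of $\mathcal{D}$ the density $f$ solves the conservative Liouville equation classically, so I would multiply by a test function $\Phi$, integrate over $\mathcal{D}\times(-\infty,\infty)$, and apply the spacetime divergence theorem. Compact support in $t$ kills the temporal boundary, the identity $\nabla_X\cdot V=0$ accounts for the spatial transport term, and the boundary of $\mathcal{D}$ splits into the spatial contact surface $\{X\in\partial\mathcal{P}\}$ and the velocity surface $\{V\in\partial\mathcal{V}(X)\}$, producing flux factors $V\cdot\hat\nu_X$ and $G\cdot\hat\nu_V$ respectively. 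On each surface the trace of $f$ is multivalued, with incoming and outgoing branches connected by the event map $\sigma_i$; rewriting the outgoing trace in terms of the incoming one, using the conservation properties of $\sigma_i$ from \Cref{Theorem:sigma} together with its Jacobian, collapses the two branches into the single boundary integrals appearing on the right-hand side of \cref{eq:LiouvilleGeneral}. Finally, conservation of linear and angular momentum is inherited from \Cref{Theorem:sigma}: every event map satisfies (COLM) and (COAM), so the event contributions to $\int(v+\bar v)f$ and $\int(x\times v+\bar x\times\bar v)f$ cancel, and the event map constraints hold by construction of $\Psi_t$.

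I expect the main obstacle to be precisely this last matching step: converting the boundary traces on the event surfaces into the Hausdorff-measure integrals on $\partial\mathcal{P}$ and $\partial\mathcal{V}(X)$. This requires tracking the Jacobian $|\det D\sigma_i|$ of each event map against the surface measure it induces, reconciling the orientation of $\hat\nu_X,\hat\nu_V$ with the sign of the incoming and outgoing fluxes, and controlling the countable (though $\delta$-separated) family of events uniformly enough to justify interchanging summation, differentiation and integration in $L^1$. The velocity-boundary term on $\partial\mathcal{V}(X)$ is the most delicate, since $\mathcal{V}(X)$, and hence the geometry of $\partial\mathcal{V}(X)$ and its normal $\hat\nu_V$, varies with $X$; here one must lean on the assumed piecewise analyticity of $\partial\mathcal{V}(X)$ and $\partial\mathcal{C}(X)$ to ensure the relevant surface integrals are well defined.
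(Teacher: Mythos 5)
Your construction is not the one that works for this statement, and the discrepancy is not cosmetic. You define $f(\cdot,t)$ as the pushforward $(\Psi_t)_\# f_0$, which (as you note) accumulates the Jacobian of the flow: a factor involving $\nabla_V\cdot G$ along free flight and $|\det D\sigma_i^\pm|$ at events, so that $f$ solves the \emph{conservative} equation $\partial_t f + V\cdot\nabla_X f + \nabla_V\cdot(Gf)=0$ between events. But \cref{eq:LiouvilleGeneral} is not the weak form of the conservative equation. Integrating the test-function bracket by parts in a collision-free region, the term $\int f\,\nabla_V\cdot(G\Phi)$ splits as $\int f(\nabla_V\cdot G)\Phi + \int f\,G\cdot\nabla_V\Phi$, and the second piece integrates by parts to $-\int \Phi\,G\cdot\nabla_V f - \int f(\nabla_V\cdot G)\Phi$ plus a surface term; the two $(\nabla_V\cdot G)$ contributions cancel, so the interior operator encoded by \cref{eq:LiouvilleGeneral} is the \emph{transport} operator $\partial_t f + V\cdot\nabla_X f + G\cdot\nabla_V f$. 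The correct candidate is therefore the plain pullback $f(X,V,t)=f_0(\Psi_{-t}(X,V))$ with no Jacobian factors, which is exactly what the paper substitutes into $I(\Phi)$, $J(\Phi)$, $K(\Phi)$. If you insert your pushforward instead, an uncancelled volume term $\int_{\mathcal{P}}\int_{\mathcal{V}(X)}\int_{-\infty}^{\infty}(\nabla_V\cdot G)\,f\,\Phi\,\diff t\diff V\diff X$ survives, and it is nonzero precisely in the cases this paper is built to handle: the frictional dynamics of \Cref{subSec:InelasticExample} have $\nabla_V\cdot G=-6\gamma$, and the inelastic event maps have $|\det D\sigma^\pm|=\alpha^{\mp1}\neq1$, so your density also jumps by a spurious factor $1/\alpha$ across each collision. (The mass conservation and nonnegativity you gain from the pushforward are not what the Definition asks for; it asks for $L^1$-continuity, the two momentum identities, and \cref{eq:LiouvilleGeneral} itself.)

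Separately, the mechanism you propose for the ``matching step'' is not the one that operates here, and the key lemmas of the paper's proof are absent from your outline. In the paper, the event surfaces do not get converted into the boundary integrals of \cref{eq:LiouvilleGeneral} by trace-plus-Jacobian bookkeeping. Instead, integration by parts in time produces jump terms $\Phi(X,V,\tau_i)\bigl[f_0(\Psi_{\tau_i^-})-f_0(\Psi_{\tau_i^+})\bigr]$ (the term $I_1$), while Reynolds' transport theorem applied to the space and velocity integrals produces matching jumps weighted by $V\cdot\nabla_X\tau_i$ and $G\cdot\nabla_V\tau_i$ (the terms $J_1$, $K_1$); these cancel, $I_1=-(J_1+K_1)$, exactly because of Transport Identity I (\Cref{prop:TIdI}), $[\Phi_t^v\cdot\nabla_Y+G\cdot\nabla_W]\tau=-1$, with \Cref{prop:TIdII} handling the interior identification $I_2=-(J_2+K_2)$. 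The boundary integrals over $\partial\mathcal{P}$ and $\partial\mathcal{V}(X)$ on the right-hand side of \cref{eq:LiouvilleGeneral} arise from the non-collisional divergence terms ($J_3$ and $K_3$), not from the event surfaces at all. You correctly flag the matching as the crux, but conservation properties of $\sigma_i$ from \Cref{Theorem:sigma} together with its Jacobian cannot do this job; what is needed are the transport identities for the event time and event map, which your proposal never states or proves.
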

	Before identifying important transport identities and proving \Cref{Theorem:main}, we make some remarks on this result.
	
	Firstly, we have made very few assumptions on the free dynamics of the particles; they can be affected by external or interparticle forces. In \Cref{subSec:InelasticExample} we consider inelastic collisions, but the same formulation can be used to consider any interactions determined by discrete step potentials \cite{Bannerman2010}. The result is therefore quite general, and should be appropriate for a range of systems.
	
	The Liouville equation is derived for a system of two particles. In principle, systems of many particles may involve many body interactions. However, given the correct subset of initial data, which ensures that all instantaneous interactions are pairwise, the equation \cref{eq:LiouvilleGeneral} should also be accurate for systems of many particles, and may be used to approximate systems where the initial data is not so carefully constructed.
	
	We can also state a general form of the BBGKY hierarchy. We start with a definition of global in time weak solutions.
	
	\begin{Definition}\label{def:BBGKY}
		Let $f_0\in C^1(\mathcal{D})\cap L^1(\mathcal{D})$ be symmetric in its particle arguments ({\em i.e.} $[v,x]$ and $[\bar{v},\bar{x}]$ can be interchanged). We say that a pair of maps $(f^{(1)}_0,f^{(2)}_0)$ with membership
		\begin{align}
		f^{(1)}\in C^0((-\infty,\infty),L^1(T\mathbb{R}^3)),\quad
		f^{(2)}\in C^0((-\infty,\infty),L^2(\mathcal{D}))
		\end{align}
		is a global in time weak solution of the BBGKY hierarchy associated to the initial data
		\begin{align}
		f_0^{(1)} = \int_{\mathbb{R}^3\backslash\mathcal{B}_\varepsilon(x)}\int_{\mathcal{V}(x,\bar{x},v)}
		f_0(X,V)\diff\bar{v}\diff\bar{x},\quad \text{ for all } [x,v]\in T\mathbb{R}^3,
		\end{align}
		and
		\begin{align}
		f_0^{(2)}(X,V) = f_0(X,V),\quad\text{ for all }[X,V]\in\mathcal{D},
		\end{align}
		if and only if, for all test functions $\phi\in C_c^\infty(T\mathbb{R}^3\times(-\infty,\infty))$,
		\begin{gather}
		\int_{\mathbb{R}^3}\int_{\mathbb{R}^3}\int_{-\infty}^{\infty}(\partial_t+v\cdot\nabla_x)\phi(x,v,t)f^{(1)}(x,v,t)\diff t\diff v\diff x\nonumber\\
		+
		\int_{\mathbb{R}^3}\int_{\mathbb{R}^3\backslash\mathcal{B}_\varepsilon(x)}\int_{\mathbb{R}^3}\int_{\mathcal{V}(x,\bar{x},v)}\int_{-\infty}^\infty\nabla_V\cdot(G(X,V,t)\phi(x,v,t))
		\nonumber\\
		\qquad \qquad \qquad \qquad \qquad \qquad \qquad 
		\times f^{(2)}(X,V,t)\diff t\diff\bar{v}\diff v\diff\bar{x}\diff x\nonumber\\
		=\label{eq:BBGKYGeneral}\\
		-\frac{1}{\sqrt{2}}
		\int_{\mathbb{R}^3}\int_{\mathbb{S}^2}\int_{\mathcal{V}(X)}\int_{-\infty}^{\infty}
		\phi(x,v,t)f^{(2)}([x,\bar{x}+\varepsilon n],[v,\bar{v}],t)(v-\bar{v})\cdot n \diff t\diff V\diff n \diff x\nonumber\\
		-\int_{\mathcal{P}}\int_{\partial\mathcal{V}(X)}\int_{-\infty}^{\infty}
		\phi(x,v,t)f^{(2)}(X,V,t)G(X,V,t)\cdot\hat{\nu}_V\diff t\diff\mathcal{H}(V,X)\diff X,
		\end{gather}
		and $f^{(2)}$ satisfies \cref{eq:LiouvilleGeneral}.
	\end{Definition}
	After a straightforward partition of the phase space, the BBGKY hierarchy then follows as a corollary of \Cref{Theorem:main}. For more applicable results, one must have a good understanding of the admissible data. For example, in linear elastic hard sphere dynamics this would result in the Boltzmann collision operator on the right hand side of \cref{eq:BBGKYGeneral} \cite{Wilkinson2018}. The following is a quick corollary of our main result.
	\begin{Corollary}\label{Cor:BBGKY}
		For any $f_0\in C^0(\mathcal{D})\cap L^1(\mathcal{D})$ there exists a global in time solution to the BBGKY hierarchy given by \Cref{def:BBGKY}.
	\end{Corollary}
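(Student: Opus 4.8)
The plan is to obtain the BBGKY pair by marginalising the solution supplied by \Cref{Theorem:main}. First I would invoke \Cref{Theorem:main} to produce a physical global-in-time weak solution $f^{(2)} \defeq f \in C^0((-\infty,\infty), L^1(\mathcal{D}))$ of \cref{eq:LiouvilleGeneral} with initial datum $f_0$; the final clause of \Cref{def:BBGKY} (that $f^{(2)}$ satisfy \cref{eq:LiouvilleGeneral}), together with conservation of linear and angular momentum and the event-map constraints, is then inherited verbatim. I would then define $f^{(1)}$ by the prescribed marginal $f^{(1)}(x,v,t) = \int_{\mathbb{R}^3\setminus\mathcal{B}_\varepsilon(x)}\int_{\mathcal{V}(x,\bar{x},v)} f^{(2)}(X,V,t)\,\diff\bar{v}\,\diff\bar{x}$, and check that $f^{(1)}\in C^0((-\infty,\infty),L^1(T\mathbb{R}^3))$, which follows from Fubini together with the continuity of $t\mapsto f^{(2)}(\cdot,t)$ into $L^1(\mathcal{D})$ since marginalisation is a bounded linear map. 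Evaluating at $t=0$ reproduces $f_0^{(1)}$ and $f_0^{(2)}=f_0$, matching the initial data of \Cref{def:BBGKY}. (One may assume $f_0$ symmetric, or symmetrise, so that the marginal is unambiguous.)

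The heart of the argument is to derive the weak identity \cref{eq:BBGKYGeneral} from \cref{eq:LiouvilleGeneral} by specialising the test function. Given $\phi\in C_c^\infty(T\mathbb{R}^3\times(-\infty,\infty))$, I would insert $\Phi(X,V,t) = \phi(x,v,t)$, a function of the first particle only. Since $\phi$ is constant in $(\bar{x},\bar{v})$, the transport term collapses: $V\cdot\nabla_X\Phi = v\cdot\nabla_x\phi$, and integrating out $(\bar{x},\bar{v})$ turns the bulk integral $\int_\mathcal{P}\int_{\mathcal{V}(X)}\int f^{(2)}(\partial_t\phi + v\cdot\nabla_x\phi)$ into exactly $\int\int\int(\partial_t+v\cdot\nabla_x)\phi\, f^{(1)}$, the first left-hand term of \cref{eq:BBGKYGeneral}; the remaining term $\int f^{(2)}\nabla_V\cdot(G\Phi)$ is already the second left-hand term. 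On the right, the velocity-boundary integral is unchanged (the second right-hand term), while the position-boundary integral is transformed using the explicit normal \cref{eq:BoundaryNormal}: on $\partial\mathcal{P}$ one computes $V\cdot\hat{\nu} = -\tfrac{1}{\sqrt{2}\varepsilon}(v-\bar{v})\cdot(x-\bar{x}) = -\tfrac{1}{\sqrt{2}}(v-\bar{v})\cdot n$ with $n=(x-\bar{x})/\varepsilon\in\mathbb{S}^2$. Passing to centre-of-mass and relative coordinates, an orthogonal change that preserves $\mathcal{H}^5$ and factorises $\diff\mathcal{H}(X)$ over $\mathbb{R}^3\times\varepsilon\mathbb{S}^2$, then yields the Boltzmann--Grad collision integral with the factor $1/\sqrt{2}$ appearing in \cref{eq:BBGKYGeneral}.

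The main obstacle is that $\Phi(X,V,t)=\phi(x,v,t)$ is not compactly supported in $\mathcal{D}$, hence not an admissible test function for \cref{eq:LiouvilleGeneral} as it stands. I would remedy this by truncation: set $\Phi_R = \phi(x,v,t)\chi_R(\bar{x},\bar{v})$ for a smooth cutoff $\chi_R$ equal to $1$ on $\{\|(\bar{x},\bar{v})\|\le R\}$ and supported in $\{\|(\bar{x},\bar{v})\|\le 2R\}$, apply \cref{eq:LiouvilleGeneral} to each $\Phi_R$, and let $R\to\infty$. The terms in which no derivative falls on $\chi_R$ converge by dominated convergence, using $f^{(2)}(\cdot,t)\in L^1(\mathcal{D})$; the error terms carry a factor $\nabla_{\bar{x}}\chi_R$ or $\nabla_{\bar{v}}\chi_R$, of size $O(1/R)$ and supported on the annulus $R\le\|(\bar{x},\bar{v})\|\le 2R$, so that on this set $|\bar{v}|/R$ and $|G|/R$ are bounded and the contributions are controlled by $\int_{\mathrm{annulus}}|f^{(2)}|\to 0$. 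Making this limit rigorous is the delicate point: it requires the integrability of the first velocity moment of $f^{(2)}$ and an at-most-linear growth control on $G$, together with the analogous control of the boundary-flux terms, precisely the qualitative hypotheses carried by the admissible data. Once the limit is justified, collecting the surviving terms gives \cref{eq:BBGKYGeneral}, and together with the properties inherited from \Cref{Theorem:main} this verifies that $(f^{(1)},f^{(2)})$ is a global-in-time weak solution in the sense of \Cref{def:BBGKY}.
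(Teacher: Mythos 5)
Your proposal is correct and takes essentially the same route as the paper: the paper gives no detailed proof at all, asserting only that the corollary follows from \Cref{Theorem:main} ``after a straightforward partition of the phase space,'' which is precisely your argument of marginalising $f^{(2)}$ to define $f^{(1)}$ and specialising the Liouville test functions to ones depending only on the first particle, with the boundary normal \cref{eq:BoundaryNormal} producing the $\tfrac{1}{\sqrt{2}}(v-\bar{v})\cdot n$ collision flux. Your truncation argument handling the fact that $\Phi(X,V,t)=\phi(x,v,t)$ is not compactly supported in $(\bar{x},\bar{v})$ --- together with the velocity-moment and growth-of-$G$ hypotheses it requires --- fills in a genuine technical step that the paper leaves entirely implicit, so your write-up is if anything more careful than the original.
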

	
	Before formulating the Liouville equations we state and prove some microscopic properties that are used in the derivation.
	\section{Microscopic properties}\label{Sec:MicroProperties}
	The proof of \Cref{Theorem:main} relies on transport identities for $\tau_i(X,V)$ and $\sigma_i(X,V)$. These identities are given in the case for linear elastic particles in \cite{Wilkinson2018}. Here we generalise to our case and interpret the results physically.
	
	We first consider the results for the free dynamics.
	
	\begin{proposition}
		[Transport Identity I]\label{prop:TIdI}
		Given $X\in\mathbb{R}^6$ and $V\in\mathcal{V}(X)$, let $\tau(X,V)$ be a particular event time in the dynamics determined by $\Psi_t^x(X,V)$ and $\Psi_t^v(X,V)$. Then for all $t\in(\tau_{i-1},\tau_{i+1})$, (omitting flow map arguments for ease of notation)
		\begin{align}
		[\Phi_t^v\cdot\nabla_{Y} + G(\Phi_t^x,\Phi_t^v)\cdot\nabla_W]\tau(Y,W)|_{Y=\Psi_t^v,W=\Psi_t^v}=-1.\label{eq:TI1}
		\end{align}
	\end{proposition}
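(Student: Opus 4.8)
The plan is to realise $\tau$ as a hitting time of a fixed interaction surface along the \emph{free} flow, and then to exploit the elementary cocycle identity satisfied by hitting times under the flow. Concretely, fix the index $i$ so that $\tau = \tau_i$ corresponds to the surface $\partial\mathcal{P}_i = \{X : \|\tilde{x}\| = \varepsilon_i\}$, and introduce the defining function $h(X) = \|x - \bar{x}\| - \varepsilon_i$, so that $\partial\mathcal{P}_i = \{h = 0\}$. For initial data $(Y,W)$ in the relevant inter-event window the event time is characterised implicitly by
\begin{align}
h\big(\Phi_{\tau(Y,W)}^x(Y,W)\big) = 0,
\end{align}
with $\tau(Y,W)$ the first root after the preceding event.

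First I would establish that $\tau$ is $C^1$ in $(Y,W)$ by applying the implicit function theorem to the scalar map $s \mapsto h(\Phi_s^x(Y,W))$. The nondegeneracy needed is transversality of the trajectory to the surface,
\begin{align}
\frac{\diff}{\diff s}\,h\big(\Phi_s^x(Y,W)\big)\Big|_{s=\tau} = \frac{\tilde{x}\cdot\tilde{v}}{\|\tilde{x}\|}\bigg|_{\text{contact}} \neq 0,
\end{align}
which holds at any genuine collision since there $\tilde{x}\cdot\tilde{v} < 0$. This both guarantees that $\tau$ is well defined and single-valued near the data and furnishes the implicit formula for $\nabla\tau$.

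The core step is the cocycle property. Using the group law $\Phi_\theta \circ \Phi_s = \Phi_{\theta+s}$ for the free flow, advancing the initial data forward by time $\theta$ decreases the remaining time to contact by exactly $\theta$, so
\begin{align}
\tau\big(\Phi_\theta(Y,W)\big) = \tau(Y,W) - \theta.
\end{align}
Differentiating at $\theta = 0$ and using $\partial_\theta\Phi_\theta^x|_{0} = W$ and $\partial_\theta\Phi_\theta^v|_{0} = G(Y,W)$ gives
\begin{align}
\big[W\cdot\nabla_Y + G(Y,W)\cdot\nabla_W\big]\tau(Y,W) = -1.
\end{align}
It remains only to evaluate this identity at the current state of the interacting flow, $(Y,W) = (\Psi_t^x(X,V), \Psi_t^v(X,V))$: within the inter-event window $\Psi$ solves Newton's equations, so the velocity coordinate $W = \Psi_t^v$ is exactly the free-flow velocity $\Phi_t^v$ appearing in the statement, and $G(Y,W) = G(\Phi_t^x, \Phi_t^v)$ is the corresponding acceleration. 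This reproduces \cref{eq:TI1}.

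The main obstacle is the differentiability of $\tau$ rather than the algebra: the argument breaks down precisely on grazing data where $\tilde{x}\cdot\tilde{v} = 0$ at contact, since there the implicit function theorem no longer applies and $\tau$ may fail to be smooth or even single-valued. I would dispose of this by noting that such grazing configurations are exactly those on $\partial\mathcal{C}(X)$, which by assumption is a piecewise-analytic submanifold of positive codimension and hence negligible for the transport identities used later. A final bookkeeping remark: if $G$ carries an explicit time dependence, the one-parameter group law must be replaced by the two-parameter evolution property of the non-autonomous flow, but the hitting-time cocycle and its differentiation are unchanged, so the conclusion $-1$ persists.
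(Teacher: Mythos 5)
Your proposal is correct and follows essentially the same route as the paper's own proof: both realise $\tau$ as a hitting time of the interaction surface, exploit the shift (cocycle) property of hitting times under the free flow to get the value $-1$ along trajectories, and then convert that time derivative into the transport operator $[\Phi_t^v\cdot\nabla_Y + G\cdot\nabla_W]$ via the chain rule and Newton's equations. The only difference is that you supply the regularity bookkeeping (implicit function theorem, transversality at contact, exclusion of grazing data) that the paper leaves implicit in its appeal to generator theory, which strengthens rather than changes the argument.
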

	\begin{proof}
		For ease of notation, we omit the arguments of $\Phi_t^x$ and $\Phi_t^v$. Firstly, we note that the event time $\tau(X,V)$ can be written as \cite{Wilkinson2018}
		\begin{align*}
		\tau(X,V) = \arg\min\{s:\Phi_t^x(X,V)\in\partial\mathcal{P}_i\}.
		\end{align*}
		We first consider $\tau$ as a function of the data at time $t$, i.e.
		\begin{align*}
		\tilde{\tau}(\Psi_t(X,V),\Psi_t(X,V)) = \tau(X,V).
		\end{align*}
		To construct the time derivative of $\tilde{\tau}$, we use first principles. Let $h>0$ and assume that $h\ll\delta$. Then
		\begin{align*}
		\tau(\Psi_{t+h}^x,\Psi_{t+h}^v)=&
		\arg\min\{
		s\in(\tau_{i-1},\tau_{i+1}):\|\tilde{\Phi}_s^x(\Psi_{t+h}^x,\Psi_{t+h}^v)\|=\varepsilon
		\}\\
		=&\arg\min\{
		s\in(\tau_{i-1},\tau_{i+1}):\|\tilde{\Phi}_{s-h}^x(\Psi_{t}^x,\Psi_{t}^v)\|=\varepsilon
		\}\\
		=&\arg\min\{
		s+h\in(\tau_{i-1},\tau_{i+1}):\|\tilde{\Phi}_{s}^x(\Psi_{t}^x,\Psi_{t}^v)\|=\varepsilon
		\}.
		\end{align*}
		We have assumed that there exists a unique value $\bar{s}$ for which this is true. It follows that
		\begin{align*}
		\frac{1}{h}(\tau(\Psi_{t}^x,\Psi_{t}^v)-\tau(\Psi_{t+h}^x,\Psi_{t+h}^v))
		=\frac{1}{h}(\bar{s}-(\bar{s}+h)) = -1.
		\end{align*}
		Using results from generator theory \cite{Pavliotis2008}, we find that, for $t\in(\tau_{i-1},\tau_{i+1})$:
		\begin{align*}
		\partial_t\tau(\Psi_t^x,\Psi_t^v)
		=
		\left[\partial_t\Phi_t^x\cdot\nabla_Y + \partial_t\Phi_t^v\cdot\nabla_W\right]\tau(Y,W)|_{Y=\Phi_t^x,w=\Phi_t^v}.
		\end{align*}
		By using the Newton equations for the free particle dynamics we arrive at the required result.
		
	\end{proof}
	The time derivative of $\tau$ offers an insight to the geometric meaning of the result; advancing forward in time towards an event in the future decreases the time until the event proportionally. The second transport identity of interest involves the event maps $\sigma_{\pm}$. We note that for linear dynamics \cref{eq:TI1} reduces to
	\begin{align}
	V\cdot\nabla_X\tau(X,V) = -1,
	\end{align}
	which is the first transport identity given in \cite{Wilkinson2018}. We can also provide an analogous result to the second transport identity in \cite{Wilkinson2018}.
	\begin{proposition}
		[Transport Identity II]\label{prop:TIdII}
		Given $X\in\mathbb{R}^6$ and $V\in\mathcal{V}(X)$, let $\sigma_i(X,V)$ be a particular event map in the dynamics determined by $\Psi_t^x(X,V)$ and $\Psi_t^v(X,V)$. Then for all $t\in(\tau_{i-1},\tau_{i+1})$, (omitting flow map arguments for ease of notation)
		\begin{align}
		[\Phi_t^v\cdot\nabla_{Y} + G(\Phi_t^x,\Phi_t^v)\cdot\nabla_W]\sigma(Y,W)|_{Y=\Psi_t^v,W=\Psi_t^v}=0.\label{eq:TI2}
		\end{align}
	\end{proposition}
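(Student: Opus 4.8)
The plan is to mirror the proof of \Cref{prop:TIdI} (Transport Identity~I), replacing the scalar event time $\tau$ by the matrix-valued event map $\sigma$. The essential observation is that, whereas advancing the current state forward by a small time $h$ decreases the \emph{remaining} time to the next event by exactly $h$ (this is what produces the $-1$ in Transport Identity~I), the event \emph{map} itself is unaffected by such a shift: for $t$ and $t+h$ both lying in the same inter-event interval $(\tau_{i-1},\tau_{i+1})$, the states $(\Psi_t^x,\Psi_t^v)$ and $(\Psi_{t+h}^x,\Psi_{t+h}^v)$ reach the identical configuration on $\partial\mathcal{P}_i$ at the identical event time $\tau_i$, and hence determine the same instantaneous velocity change. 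So the quantity whose time derivative we compute is in fact constant.

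First I would regard $\sigma$ as a function of the current state, writing $\sigma(\Psi_t^x(X,V),\Psi_t^v(X,V))$ for the event map of the relevant event computed from the data at time $t$. Using the semigroup property of the free flow, $\Phi_{s}(\Psi_{t+h}^x,\Psi_{t+h}^v)=\Phi_{s+h}(\Psi_t^x,\Psi_t^v)$, so the state attained on $\partial\mathcal{P}_i$ at the event is the same whether the trajectory is started from time $t$ or from time $t+h$. Since $\sigma$ is determined entirely by the configuration at the event --- through $N(X)$ together with the event-map constraint fixing $\eta(X,V)$, as in \Cref{Theorem:sigma} --- it follows that for $0<h\ll\delta$,
\begin{align*}
\sigma(\Psi_{t+h}^x,\Psi_{t+h}^v)=\sigma(\Psi_t^x,\Psi_t^v),
\end{align*}
so the difference quotient vanishes identically and $\partial_t\sigma(\Psi_t^x,\Psi_t^v)=0$ for $t\in(\tau_{i-1},\tau_{i+1})$.

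It then remains to convert this vanishing total time derivative into the stated differential-operator identity. As in \Cref{prop:TIdI}, I would invoke generator theory to expand
\begin{align*}
\partial_t\sigma(\Psi_t^x,\Psi_t^v)=\left[\partial_t\Phi_t^x\cdot\nabla_Y+\partial_t\Phi_t^v\cdot\nabla_W\right]\sigma(Y,W)\big|_{Y=\Phi_t^x,\,W=\Phi_t^v},
\end{align*}
and substitute the Newton equations $\partial_t\Phi_t^x=\Phi_t^v$ and $\partial_t\Phi_t^v=G(\Phi_t^x,\Phi_t^v)$ to recover precisely the left-hand side of \cref{eq:TI2}, which we have just shown equals zero. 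The main obstacle I anticipate is not the geometric invariance argument, which is intuitively transparent, but the regularity needed to justify it rigorously: one must ensure that $\sigma(Y,W)$ is differentiable in its arguments on the region swept out by the flow, which relies on the assumed piecewise analyticity of $\partial\mathcal{C}(X)$ and the smoothness of the event-map constraint defining $\eta$, and one must keep the evaluation point bounded away from the event boundary itself (using $\tau_i-\tau_{i-1}>\delta$) so that the next event is uniquely and smoothly determined throughout the interval.
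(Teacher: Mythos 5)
Your proposal matches the paper's own proof in all essentials: both regard $\sigma$ as determined by the configuration at the event time via \Cref{Theorem:sigma} (i.e.\ through $\eta$ and $N\otimes N$), both use the flow/semigroup property $\Phi_{\tau_i-t}(\Psi_t^x,\Psi_t^v)=\Phi_{\tau_i}$ to conclude that this configuration, and hence $\sigma$, is constant in $t$ on the inter-event interval, and both then invoke generator theory to convert $\partial_t\sigma(\Psi_t^x,\Psi_t^v)=0$ into the stated differential-operator identity. The only cosmetic difference is that you establish the invariance by a vanishing difference quotient (mirroring the proof of \Cref{prop:TIdI}), whereas the paper writes the time-shift composition directly inside the derivative and differentiates the constant expression.
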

	\begin{proof}
		Again, generator theory provides a link between the time derivative of $\sigma_i(X,V)$ and the left hand side of \cref{eq:TI2}. We note that, by shifting in time,
		\begin{align*}
		\partial_t\sigma(\Phi_t^x,\Phi_t^v) =&
		\partial_t\left(\eta[\Phi_{\tau_i-t}^x(\Psi_t^x,\Psi_t^v),\Phi_{\tau_i-t}^v(\Psi_t^x,\Psi_t^v)]
		\left(N[\Phi_{\tau_i-t}^x(\Psi_t^x,\Psi_t^v)]
		\otimes 
		N[\Phi_{\tau_i-t}^x(\Psi_t^x,\Psi_t^v)]\right)\right),\\
		=& \partial_t\left(\eta\left[\Phi_{\tau_i}^x,\Phi_{\tau_i}^v\right]
		N[\Phi_{\tau_i}^x]
		\otimes
		N[\Phi_{\tau_i}^x]\right)=0,
		\end{align*}
		which completes the proof.
	\end{proof}
	The right hand side confirms that the scattering maps are not dependent on time; they depend only on the instantaneous positions and velocities at the time of the interaction.
	
	We can now use these microscopic properties to construct the Liouville equation in the next section.
	\section{Liouville formulation}\label{Sec:LiouvilleFormulation}
	We consider the following three integrals
	\begin{align}
	I(\Phi) =& \int_{\mathcal{P}}\int_{\mathcal{V}(X)}\int_{-\infty}^\infty
	f^{(2)}(X,V,t)\partial_t\Phi(X,V,t)\diff t\diff V \diff X,\label{eq:timeLiouville}\\
	J(\Phi) =& \int_{\mathcal{P}}\int_{\mathcal{V}(X)}\int_{-\infty}^\infty
	f^{(2)}(X,V,t)V\cdot\nabla_X\Phi(X,V,t)\diff t\diff V \diff X,\label{eq:spaceLiouville}\\
	K(\Phi) =& \int_{\mathcal{P}}\int_{\mathcal{V}(X)}\int_{-\infty}^\infty
	f^{(2)}(X,V,t)\nabla_V\cdot[G(X,V,t)\Phi(X,V,t)]\diff t\diff V \diff X,\label{eq:velocityLiouville}
	\end{align}
	which we call the time, space and velocity derivative terms respectively. The method to derive the Liouville equation in both of our examples is similar to the method used in \cite{Wilkinson2018}: we wish to find weak solutions $f^{(2)}$ to the equation
	\begin{align*}
	\mathcal{L}[f^{(2)}] = C[f^{(2)}],
	\end{align*}
	where the operator $\mathcal{L}$ is the Liouville equation associated to our choice of dynamics, for point-like particles. To derive the operator on the right hand side, we consider the Liouville equation on test functions $\Phi\in C_c^\infty(\mathcal{D}\times(-\infty,\infty))$, multiply by $f^{(2)}$ and integrate to find
	\begin{align}
	\int_{\mathcal{P}}\int_{\mathbb{R}^6}\int_{-\infty}^{\infty}
	f^{(2)}(X,V,t)\mathcal{L}[\Phi(X,V,t)]\diff X \diff V \diff t = 0.
	\end{align}
	We then separate the phase space into parts where $f^{(2)}$ is smooth, and evaluate each of the integrals constructed. Surface terms that arise then contribute to the collisional term on the right hand side of the weak Liouville equation.

	\subsection{The time derivative term}\label{subSec:TimeDerivative}
	Using the flow maps $\Psi_t^x,\Psi_t^v$, we can write the integral $I(\Phi)$ as follows:
	\begin{align}
	I(\Phi) = \int_{\mathcal{P}}\int_{\mathcal{V}(X)}
	\sum_{i=-M(X,V)}^{N(X,V)}
	\int_{-\tau_i}^{-\tau_{i-1}}
	f_0^{(2)}(\Psi_{-t}^x,\Psi_{-t}^v)\partial_t\Phi(X,V,t)\diff t \diff V\diff x.
	\end{align}
	On each interval $(\tau_{i-1},\tau_i)$ the flow is described by the free dynamics, so we may apply integration by parts, keeping in mind that evaluating $\Psi_t(X,V)$ at event times from the left or right provides a different result, and using a compact function $\Phi$ yields zero at $\tau_{-M}=-\infty,\tau_N=\infty$,
	\begin{align*}
	I(\Phi) =& \int_{\mathcal{P}}\int_{\mathcal{V}(X)}
	\sum_{i=-M+1}^{N-1}
	\Bigg\{
	\Phi(X,V,\tau_{i})(f_0^{(2)}(\Psi_{\tau_{i}^-}) - f_0^{(2)}(\Psi_{\tau_{i}^+}))\\
	&-
	\int_{-\tau_i}^{-\tau_{i-1}}
	\Phi(X,V,t)\partial_tf_0^{(2)}(\Psi_{-t}^x,\Psi_{-t}^v)\diff t
	\Bigg\}
	\diff V \diff X,
	\end{align*}
	where
	\begin{align*}
	\Psi_{\tau^-} = \lim_{t\rightarrow\tau^-}\Psi_t,\quad
	\Psi_{\tau^+} = \lim_{t\rightarrow\tau^+}\Psi_t.
	\end{align*}
	meaning that the summation of the surface terms in this integral do not cancel (due the the application of $\sigma_i(X,V)$ at each event time $\tau_i$). For each interval $(\tau_{i-1},\tau_i)$, the flow maps $\Psi_t^x,\Psi_t^v$ are determined by the free dynamics $\Phi_t^x,\Psi_t^v$ with particular initial conditions. We apply the chain rule on each of these intervals. For the second term in this result, we consider three separate cases.
	
	If there are no particle-particle interactions for initial data $X,V$ such that $M+N=1$, we have that
	\begin{align*}
	\partial_tf_0^{(2)}(\Psi_{-t}^x,\Psi_{-t}^v)
	=&
	[\partial_t\Phi_{-t}^x\cdot\nabla_Y + \partial_t\Phi_{-t}^v\cdot\nabla_W]
	f_0^{(2)}|_{Y=\Phi_{-t}^x,W=\Phi_{-t}^v},\\
	=&
	[\Phi_{-t}^v\cdot\nabla_Y + G(X,V,-t)\cdot\nabla_W]
	f_0^{(2)}|_{Y=\Phi_{-t}^x,W=\Phi_{-t}^v},\\
	=&-[\Psi_{-t}^v\cdot\nabla_X + G(X,V,-t)\cdot\nabla_V]
	f^{(2)}(X,V,t).
	\end{align*}
	
	For initial data which causes particle-particle collisions,
	\begin{align*}
	\partial_tf_0^{(2)}(\Psi_{-t}^x,\Psi_{-t}^v)
	=&
	\begin{cases}
	-[\Psi_{-t}^v\cdot\nabla_X + G(X,V,-t)\cdot\nabla_V]f^{(2)}(X,V,t),&\tau_{i_0-1}\le t\le\tau_{i_0},\\
	-[\Psi_{-(t-\tau_i)}^v\cdot\nabla_Y + G(\Psi_{-\tau_i}^x,\sigma_i^+\Psi_{-\tau_i}^v,-(t-\tau_i))\cdot\nabla_W]f^{(2)}(Y,W,t),&\begin{aligned}[c]&\tau_i<t<\tau_{i+1},\\& i=i_0,...,N-1,\end{aligned}\\
	-[\Psi_{-(t-\tau_i)}^v\cdot\nabla_Y + G(\Psi_{-\tau_i}^x,\sigma_i^-\Psi_{-\tau_i}^v,-(t-\tau_i))\cdot\nabla_W]f^{(2)}(Y,W,t),&\begin{aligned}[c]&\tau_i<t<\tau_{i+1},\\& i=-M+1,...,(i_0-1),\end{aligned}\\
	\end{cases}
	\end{align*}
	where $i_0=0$ if $\tau_0<0$ ({\em i.e.} the closest event to $t=0$ is in the past) and $i_0=1$ if $\tau_0>0$ (the closes event is in the future).
	
	We partition the result into the surface terms and the new integrals:
	
	\begin{gather}
	I_1(\Phi) = \int_{\mathcal{P}}\int_{\mathcal{V}(X)}\sum_{i=-M+1}^{N-1}
	\Phi(X,V,\tau_{i})(f_0^{(2)}(\Psi_{\tau_{i}^-}) - f_0^{(2)}(\Psi_{\tau_{i}^+}))\diff V \diff X,\\
	I_2(\Phi) = -\int_{\mathcal{P}}\int_{\mathcal{V}(X)}\sum_{i=-M+1}^{N-1}
	\int_{-\tau_i}^{-\tau_{i-1}}
	\Phi(X,V,t)\partial_tf_0^{(2)}(\Psi_{-t}^x,\Psi_{-t}^v)\diff t
	\diff V\diff X.
	\end{gather}
	
	\subsection{The space derivative term}\label{subSec:SpaceDerivative}
	We write $J(\Phi)$ as a sum of time integrals, where for each integral the argument is smooth:
	\begin{align}
	J(\Phi) = \int_{\mathcal{P}}\int_{\mathcal{V}(X)}\sum_{i=-M+1}^{N}\int_{-\tau_i}^{-\tau_{i-1}}
	f_0^{(2)}(\Psi_{-t}^x,\Psi_{-t}^v)V\cdot\nabla_X\Phi(Z,t)\diff t\diff V \diff X,
	\end{align}
	so that on each interval we can apply the following rule:
	\begin{align}
	f_0^{(2)}V\cdot\nabla_X\Phi = \mathrm{div}(Vf_0^{(2)}\Phi) -\Phi V\cdot\nabla_Xf_0^{(2)}.
	\end{align}
	Then we have
	\begin{align}
	J(\Phi) = 
	\int_{\mathcal{P}}\int_{\mathcal{V}(X)}\Bigg\{
	\sum_{i=-M+1}^N\int_{-\tau_i}^{-\tau_{i+1}}\mathrm{div}(Vf_0^{(2)}\Phi)\diff t
	-
	\int_{-\infty}^{\infty}\Phi(X,V,t)V\cdot \nabla_X f^{(2)}(X,V,t)\diff t
	\Bigg\}.
	\end{align}
	We then consider the first (divergence) term in $J(\Phi)$. We split this integral by the velocities $V\in\mathcal{C}(X)$ that cause interactions and $V\in\mathcal{V}(X)\backslash\mathcal{C}(X)$ that do not. For the latter case we have
	\begin{align}
	M_1(\Phi):=  \int_{\mathcal{P}}\int_{\mathcal{V}(X)\backslash\mathcal{C}(X)}\int_{-\infty}^{\infty}\mathrm{div}(Vf_0^{(2)}(\Phi_{-t}^x,\Phi_{-t}^v)\Phi(X,V,t))\diff t
	\end{align}
	which then, by the divergence theorem
	\begin{align}
	M_1(\Phi) =& \int_{\partial\mathcal{P}}\int_{\mathcal{V}(X)\backslash \mathcal{C}(X)}\int_{-\infty}^{\infty}\Phi(X,V,t)f^{(2)}(X,V,t)V\cdot\hat{\nu}(X,V)\diff t\diff V\diff \mathcal{H}(X),
	\end{align}
	where $\hat{\nu}(X,V)$ is the outward unit normal of $\partial_\mathcal{P}$.
	
	For the collisional integrals, we use Reynold's transport theorem \cite{Marsden2003} on each term in the sum to find
	\begin{align}
	\int_{\mathcal{P}}\int_{\mathcal{C}(X)}&
	\sum_{i=-M+1}^N\int_{-\tau_i}^{-\tau_{i+1}}\mathrm{div}(Vf_0^{(2)}\Phi)\diff t \diff V \diff X\nonumber\\
	=&
	\int_{\partial\mathcal{P}}\int_{\mathcal{C}(X)}\sum_{i=-M+1}^N\mathrm{div}_X
	\int_{-\tau_i}^{-\tau_{i-1}}
	V\Phi(X,V,t)f_0^{(2)}(\Psi_{-t})\diff t\diff V \diff X\nonumber\\
	&+\int_{\mathcal{P}}\int_{\mathcal{C}(X)}\sum_{i=-M+1}^N(V\cdot\nabla_X\tau_i)\Phi(X,V,\tau_i)[f_0^{(2)}(\Psi_{\tau_i^-})-f_0^{(2)}(\Psi_{\tau_i^+})]\diff V\diff X.
	\end{align}
	Once again, we split the result $J(\Phi)$ into several parts:
	\begin{align}
	J_1(\Phi) =& \int_{\mathcal{P}}\int_{\mathcal{C}(X)}\sum_{i=-M+1}^N(V\cdot\nabla_X\tau_i)\Phi(X,V,\tau_i)[f_0^{(2)}(\Psi_{\tau_i^-})-f_0^{(2)}(\Psi_{\tau_i^+})]\diff V\diff X\\
	J_2(\Phi) =& -\int_{\mathcal{P}}\int_{\mathcal{V}(X)}\int_{-\infty}^{\infty}
	\Phi(X,V,t)V\cdot\nabla_Xf^{(2)}(X,V,t)\diff t\diff V\diff X,\\
	J_3(\Phi) =& \int_{\partial\mathcal{P}}\int_{\mathcal{V}(X)\backslash\mathcal{C}(X)}\int_{-\infty}^{\infty}\Phi(X,V,t)f^{(2)}(X,V,t)V\cdot\hat{\nu}(X,V)\diff t\diff V\diff \mathcal{H}(X)\nonumber\\
	&+ \int_{\partial\mathcal{P}}\int_{\mathcal{C}(X)}\sum_{i=-M+1}^N\mathrm{div}_X
	\int_{-\tau_i}^{-\tau_{i-1}}
	V\Phi(X,V,t)f_0^{(2)}(\Psi_{-t})\diff t\diff V \diff X.
	\end{align}
	
	\subsection{The velocity derivative term}\label{subSec:VelocityDerivative}
	The velocity derivative follows a similar argument to the one above; we use the following calculus identity between each two event times, as in the spacial derivative:
	\begin{align}
	f^{(2)}\nabla_V\cdot(G\Phi) =
	\mathrm{div}_V(Gf^{(2)}\Phi) - \Phi G\cdot\nabla_Vf^{(2)}.\label{eq:velExpansion}
	\end{align}
	It remains to consider the first term in \cref{eq:velExpansion}. First we consider the case of no interactions. As we have assumed that for each $X\in\mathcal{P}$, $\mathcal{V}(X)\backslash\mathcal{C}(X)$ is a piecewise analytic submanifold of $\mathbb{R}^6$, the divergence theorem provides us with the following result:
	\begin{align}
	M_2(\Phi)
	:=&
	\int_{\mathcal{P}}\int_{\mathcal{V}(X)\backslash\mathcal{C}(X)}\int_{-\infty}^{\infty}\mathrm{div}(G(X,V,t)\Phi(X,V,t)f_0^{(2)}(\Phi_{-t}))\diff t \diff V\diff X,\nonumber\\
	=&
	\int_{\mathcal{P}}
	\int_{\partial(\mathcal{V}(X)\backslash \mathcal{C}(X))}
	\int_{-\infty}^{\infty}
	f^{(2)}(X,V,t)\Phi(X,V,t)
	G\cdot\hat{\eta}_V(V)\diff t\diff\mathcal{H}(V)\diff X.
	\end{align}
	For the collisional part, we have that, using the Reynolds transport theorem in an analogous fashion to before, under the assumption that $\mathcal{C}(X)$ is an analytic submanifold of $\mathcal{V}(X)$:
	\begin{align}
	\int_{\mathcal{P}}\int_{\mathcal{C}(X)}&\sum_{i=-M+1}^N\int_{-\tau_i}^{\tau_i}\mathrm{div}_V(G\Phi f_0^{(2)}(\Psi_{-t}))\diff t \diff V \diff X\nonumber\\
	=&
	\int_\mathcal{P}\int_{\partial C(X)}
	\sum_{i=1}^{N-1}
	\mathrm{div}_V
	\int_{-\tau_i}^{-\tau_{i+1}}
	Gf^{(2)}(Z,t)\Phi(Z,t)\diff t\diff V\diff X\nonumber\\
	+&
	\int_\mathcal{P}\int_{C(X)}
	\sum_{i=1}^{N-1}
	[G\cdot\nabla_V\tau_{i}]\Phi(Z,\tau_{i})
	[f_0^{(2)}(\Psi_{\tau_i^-})-f_0^{(2)}(\Psi_{\tau_i^+})]\diff V\diff X.
	\end{align}
	We partition the result as follows:
	\begin{align}
	K_1(\Phi) =& \int_\mathcal{P}\int_{C(X)}
	\sum_{i=1}^{N-1}
	[G\cdot\nabla_V\tau_{i}]\Phi(Z,\tau_{i})
	[f_0^{(2)}(\Psi_{\tau_i^-})-f_0^{(2)}(\Psi_{\tau_i^+})]\diff V\diff X\\
	K_2(\Phi) =& -\int_{\mathcal{P}}\int_{\mathcal{V}(X)}\int_{-\infty}^{\infty}
	\Phi(X,V,t)G\cdot\nabla_Vf^{(2)}(X,V,t)\diff t\diff V\diff X,\\
	K_3(\Phi) =& \int_{\mathcal{P}}
	\int_{\partial(\mathcal{V}(X)\backslash\mathcal{C}(X))}
	\int_{-\infty}^{\infty}
	f^{(2)}(X,V,t)\Phi(X,V,t)
	G\cdot\hat{\eta}_V(V)\diff t\diff\mathcal{H}(V)\diff X\nonumber\\
	&+\int_\mathcal{P}\int_{\partial C(X)}
	\sum_{i=1}^{N-1}
	\mathrm{div}_V
	\int_{-\tau_i}^{-\tau_{i+1}}
	Gf^{(2)}(Z,t)\Phi(Z,t)\diff t\diff V\diff X.
	\end{align}
	\subsection{Combining all terms}\label{subSec:AllTerms}
	We now combine all contributions into one equation. We note that, by \cref{prop:TIdI},
	\begin{align}
	I_1(\Phi) = -(J_1(\Phi)+K_1(\Phi)),
	\end{align}
	and so when combining all contributions, these terms disappear. 
	By an application of generator theory on $\Psi_{-t}^x,\Psi_{-t}^v$ \cite{Pavliotis2008}, and the results of \Cref{prop:TIdI,prop:TIdII} we see that
	\begin{align}
	I_2(\Phi) = -(J_2(\Phi)+K_2(\Phi)).
	\end{align}
	The remaining terms $J_3(\Phi)$ and $K_3(\Phi)$ are surface terms in position and velocity phase space respectively. By applying the dominated convergence theorem \cite{Royden1988}, and the divergence theorem, we find
	\begin{align}
	J_3(\Phi) =& \int_{\partial\mathcal{P}}\int_{\mathcal{V}(X)}\int_{-\infty}^{\infty}f^{(2)}(X,V,t)\Phi(X,V,t)V\cdot\hat{\nu}\diff t\diff V\diff \mathcal{H}(X),\\
	K_3(\Phi) =& \int_{\mathcal{P}}\int_{\partial\mathcal{V}(X)}\int_{-\infty}^{\infty}f^{(2)}(X,V,t)\Phi(X,V,t)G(X,V,t)\cdot\hat{\nu}_V(X,V)\diff t\diff \mathcal{H}(X,V)\diff X.   
	\end{align}
	This concludes the proof of \cref{Theorem:main}.
	
	As previously noted, the results of \cref{Theorem:main} are quite general. To provide more of an insight into our methodology, we now consider an example of particle dynamics and interparticle interactions that is of interest.
	\section{Dynamics with friction, gravity and inelasticity}\label{subSec:InelasticExample}
	\subsection{Free dynamics}
	For some $G=[g,g]\in\mathbb{R}^6$ where $g\in\mathbb{R}^3$ and $\gamma>0$, we consider the following partial differential equations:
	\begin{align}
	\frac{\diff \Phi_t^x(X,V)}{\diff t} = \phi_t^v(X,V),\quad \frac{\diff \Phi_t^v(X,V)}{\diff t} = -\gamma V - \bm{G}.
	\end{align}
	Physically, these equations are used to model viscous drag and gravitational force. The resulting equations of motion for free particles are well known and can be easily derived:
	\begin{align}
	\Phi_t^x(X,V) =& X -\frac{t}{\gamma}G + \frac{1}{\gamma}\left(
	V + \frac{1}{\gamma}G
	\right)(1-e^{-\gamma t}),\label{eq:FGIPosition}\\
	\Phi_t^v(X,V) =& -\frac{1}{\gamma}G + \left(V+\frac{1}{\gamma}G\right)e^{-\gamma t}.\label{eq:FGIVelocity}
	\end{align}
	
	\subsection{Inelastic Collisions}
	We consider these dynamics where collisions between particles are {\em inelastic}, so that the energy of the system is reduced when two particles collide. To determine the associated scattering map, we include the following additional condition on the forward and reverse event maps $\sigma^\pm(X,V)$:
	\begin{align}
	|\nabla_V\sigma^+(X,V)V| = -\alpha, \quad |\nabla_V\sigma^-(X,V)V| = -\frac{1}{\alpha}\label{eq:MongeAmpere}
	\end{align}
	
	We give the following result on the form of the scattering map.
	\begin{lemma}\label{lemma:InelasticBoltzmann}
		Under the conditions \cref{eq:COLM,eq:COAM,eq:MongeAmpere}, under the additional assumption that $\eta$ in \Cref{Theorem:sigma} is constant, the event maps $\sigma^\pm(X,V)$ have the form
		\begin{align}
		\sigma^+(X,V) = I - (1+\alpha)N(X)\otimes N(X)\nonumber,\\
		\sigma^-(X,V) = I - \frac{1+\alpha}{\alpha}N(X)\otimes N(X).
		\end{align}
	\end{lemma}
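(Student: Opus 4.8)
The plan is to combine the structural characterisation of \Cref{Theorem:sigma} with the extra Jacobian constraint \cref{eq:MongeAmpere}, so that the whole argument reduces to pinning down two scalars. First I would invoke \Cref{Theorem:sigma}: since the event maps conserve linear and angular momentum, \cref{eq:COLM} and \cref{eq:COAM} force the rank-one-perturbation form $\sigma^\pm(X,V) = I - \eta^\pm(X,V)\,N(X)\otimes N(X)$. By the standing hypothesis that $\eta$ is constant, the scalars $\eta^\pm$ no longer depend on $(X,V)$, so the only remaining task is to determine the two numbers $\eta^+$ and $\eta^-$ from the inelasticity condition.

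Next, because $\eta^\pm$ are constant, the velocity update $V\mapsto\sigma^\pm V$ is linear, and hence its velocity-Jacobian is simply the matrix $\sigma^\pm$ itself, $\nabla_V(\sigma^\pm V)=\sigma^\pm$. I would then evaluate the determinant of this rank-one perturbation of the identity. Using the matrix determinant lemma, equivalently noting that $P=N(X)\otimes N(X)$ has the single nonzero eigenvalue $\|N(X)\|^2$ along $N(X)$ and acts as zero on its five-dimensional orthogonal complement, one obtains $\det\sigma^\pm = 1 - \eta^\pm\,\|N(X)\|^2$, a constant. The constraint \cref{eq:MongeAmpere} then reads $\det\sigma^+ = -\alpha$ and $\det\sigma^- = -1/\alpha$, two scalar equations that are solved immediately for $\eta^+$ and $\eta^-$, reproducing the stated coefficients $1+\alpha$ and $(1+\alpha)/\alpha$.

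Finally, I would confirm internal consistency with time-reversibility. Expanding $\sigma^+\sigma^- = (I - \eta^+ P)(I - \eta^- P)$ and using $P^2 = \|N(X)\|^2 P$ gives $\sigma^+\sigma^- = I - (\eta^+ + \eta^- - \eta^+\eta^-\|N(X)\|^2)\,P$; substituting the computed values of $\eta^\pm$ makes the bracketed coefficient vanish, so $\sigma^+\sigma^- = I$, matching the reversibility requirement $\sigma_i^+\sigma_i^- = I$ imposed earlier. This check both validates the solution and guards against sign errors.

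The main obstacle is the careful interpretation of the Jacobian condition \cref{eq:MongeAmpere}. One must be precise about whether $|\nabla_V \sigma^\pm V|$ denotes the \emph{signed} Jacobian determinant over the full six-dimensional velocity space or over the relative-velocity subspace on which the event map genuinely acts, and one must track the normalisation of $N(X)$ accordingly, since $N(X)$ as defined is not a unit vector in $\mathbb{R}^6$. Getting this bookkeeping right, in particular reproducing the orientation-reversing value $-\alpha$ with the correct sign rather than its absolute value, is exactly what fixes the precise coefficients $1+\alpha$ and $(1+\alpha)/\alpha$; once the determinant is correctly set up, the remainder is the routine linear algebra of a rank-one update.
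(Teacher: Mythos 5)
Your proposal is correct and takes essentially the same route as the paper: invoke \Cref{Theorem:sigma} to get $\sigma^\pm = I - \eta^\pm N(X)\otimes N(X)$, observe that for constant $\eta^\pm$ the velocity-Jacobian is $\sigma^\pm$ itself, and evaluate its determinant as a rank-one update of the identity (the paper does this via eigenvalues --- five equal to $1$, one equal to $1-\eta^\pm$ --- which is exactly your matrix determinant lemma), then solve $1-\eta^+=-\alpha$ and $1-\eta^-=-1/\alpha$. Your normalisation worry is well founded --- the paper's proof implicitly treats $N(X)$ as a unit vector even though $\|N(X)\|=\sqrt{2}$ with the definition in \Cref{Theorem:sigma} --- and your closing check $\sigma^+\sigma^-=I$ is a consistency verification the paper omits, but neither point changes the fact that your argument coincides with the paper's.
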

	\begin{proof}
		From \Cref{Theorem:sigma}, we know that
		\begin{align*}
		\sigma^\pm(X,V) = I - \eta N(X)\otimes N(X).
		\end{align*}
		Then
		\begin{align*}
		\nabla_V(\sigma(X,V)V) = I - \eta N(X) \otimes N(X).
		\end{align*}
		It remains to solve \cref{eq:MongeAmpere}. As the determinant of a matrix is the product of its eigenvalues $\lambda_i$, we have that
		\begin{align*}
		\prod_{i=1}^6 \lambda_i^+ = -\alpha,\quad \prod_{i=1}^6 \lambda_i^-= -\frac{1}{\alpha}.
		\end{align*}
		For both maps $\lambda_i^\pm=1$ for $i=1,...,5$, by using the 5 independent eigenvectors which are perpendicular to $N(X)$. For $\sigma^+(X,V)$, the remaining eigenvalue, given by a vector parallel to $N(X)$, must be $-\alpha$. Thus,
		\begin{align*}
		V + \eta(X,V)V = -\alpha V.
		\end{align*}
		Rearranging we find $\eta V = (1+\alpha)V$, which gives the required result. The result for $\sigma^-(X,V)$ is analogous.
	\end{proof}
	We remark that upon relaxing the assumption that $\eta(X,V)$ is constant, the (first) Monge-Ampere equation becomes
	\begin{align}
	|I - N(X)\cdot VN(X)\otimes\nabla_V(\eta^+(X,V)) -\eta^+(X,V)N(X)\otimes N(X)|=-\alpha.\label{eq:MongeAmpereGeneral}
	\end{align}
	In particular, this could result in physically valid {\em non-linear} scattering maps $\sigma^\pm(X,V)$ for a particular event. Understanding this equation is an interesting topic for future work. In this section we focus on the inelastic Boltzmann scattering maps defined in \Cref{lemma:InelasticBoltzmann} and note that when $\alpha=1$ these reduce to the elastic Boltzmann scattering map considered in \cite{Wilkinson2018}. 
	
	\subsection{Velocity cones and collision times}
	As the reduced difference dynamics $\tilde{\Phi}_t^x(X,V)$ follow straight lines (parametrised exponentially in $-\gamma t$), we see that two particles satisfying \cref{eq:FGIPosition,eq:FGIVelocity} can experience at most one collision. The initial data can be partitioned into non-interacting, pre-collisional and post-collisional.
	
	For $X\in\mathcal{P}$ and $V\in\mathbb{R}^6$, we write $L(X,V)\subset\mathbb{R}^6$ to denote the line
	\begin{align}
	L(X,V) = \left\{
	X -\frac{gt}{\gamma} + \frac{1}{\gamma}\left(
	V + \frac{g}{\gamma}
	\right)(1-e^{-\gamma t})
	: t\in\mathbb{R}
	\right\}
	\end{align}
	and define the two infinite half lines
	\begin{align}
	L^-(X,V) = \left\{
	X -\frac{gt}{\gamma} + \frac{1}{\gamma}\left(
	V + \frac{g}{\gamma}
	\right)(1-e^{-\gamma t})
	: t\le 0
	\right\},\\
	L^+(X,V) = \left\{
	X -\frac{gt}{\gamma} + \frac{1}{\gamma}\left(
	V + \frac{g}{\gamma}
	\right)(1-e^{-\gamma t})
	: t\ge 0
	\right\}
	\end{align}
	The velocity collision cone is then defined analogously
	\begin{align}
	C(X) = \left\{
	V\in\mathbb{R}^6: L(X,V)\cap\partial\mathcal{P}\neq\emptyset
	\right\}
	\end{align}
	And we can split this set into precollisional and postcollisional velocities respectively:
	\begin{align}
	C^-(X) = \left\{
	V\in C(X): L^+(X,V)\cap\partial\mathcal{P}\neq\emptyset
	\right\},\\
	C^+(X) = \left\{
	V\in C(X): L^-(X,V)\cap\partial\mathcal{P}\neq\emptyset
	\right\}.
	\end{align}
	
	We note that a constant external potential $G$ does not have an effect on the shape of the collision cones, as it does not effect the dynamics determined by the relative distance of the particles. The frictional constant $\gamma$ truncates the precollisional velocity cone (when compared to linear dynamics). 
	However, for these dynamics, for any given $X\in\mathcal{P}$, all initial velocities $V\in\mathbb{R}^6$ are admissible, and so in particular the second surface term in the weak formulation of the Liouville equation disappears.

	
	
	In this case we can analytically construct the unique event time $\tau(X,V)$.
	
	\begin{lemma}[Characterisation of the Collision Time Map for dynamics with gravity and friction]
		For any $X\in\mathcal{P}$,
		
		\begin{enumerate}
			\item If $V\in\mathbb{C}^+(X)$ then $\tilde{x}\cdot\tilde{v}>0$ and 
			\begin{align}
			\tau(X,V) =
			-\frac{1}{\gamma}\log\left(
			1+\frac{\gamma}{\|\tilde{v}\|}\left\{
			\tilde{x}\cdot\widehat{\tilde{v}}
			+
			\left[
			(\tilde{x}\cdot\widehat{\tilde{v}})^2
			-
			(\|\tilde{x}\|^2-\varepsilon^2)
			\right]^{\frac{1}{2}}
			\right\}
			\right).
			\end{align}
			\item If $V\in C^-(X)$ then $-\frac{1}{2}\left(
			\gamma(\|x\|^2-\varepsilon^2)+\frac{\|\tilde{v}\|^2}{\gamma}
			\right)<\tilde{x}\cdot\tilde{v}<0$ and
			\begin{align}
			\tau(X,V) = 
			-\frac{1}{\gamma}\log\left(
			1+\frac{\gamma}{\|\tilde{v}\|}\left\{
			\tilde{x}\cdot\widehat{\tilde{v}}
			-
			\left[
			(\tilde{x}\cdot\widehat{\tilde{v}})^2
			-
			(\|\tilde{x}\|^2-\varepsilon^2)
			\right]^{\frac{1}{2}}
			\right\}
			\right).
			\end{align}
		\end{enumerate}
	\end{lemma}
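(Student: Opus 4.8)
The plan is to reduce the collision-time computation to a scalar equation in the reduced difference variable $\tilde{x}=x-\bar{x}$, since a collision occurs exactly when the two particles reach contact, i.e. $\|\tilde{x}(t)\|=\varepsilon$. First I would form the difference of the two position flow maps. Subtracting the two copies of \cref{eq:FGIPosition} and using $G=[g,g]$ (so the external force contributes \emph{identically} to both particles and cancels in the difference), the relative position evolves as
\begin{align*}
\tilde{\Phi}_t^x(X,V) = \tilde{x} + \frac{1}{\gamma}\tilde{v}\,(1-e^{-\gamma t}),
\end{align*}
which is exactly the statement already made in the text that the constant potential does not affect the relative dynamics and that the reduced trajectories are straight lines parametrised exponentially in $-\gamma t$. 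This confirms the claim preceding the lemma and isolates the only time-dependent quantity as $s:=\tfrac{1}{\gamma}(1-e^{-\gamma t})$.

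The core step is then to impose contact. Setting $\|\tilde{\Phi}_t^x(X,V)\|^2=\varepsilon^2$ and writing $s=\tfrac{1}{\gamma}(1-e^{-\gamma t})$, I would expand
\begin{align*}
\|\tilde{x} + s\,\tilde{v}\|^2 = \|\tilde{x}\|^2 + 2s\,(\tilde{x}\cdot\tilde{v}) + s^2\|\tilde{v}\|^2 = \varepsilon^2,
\end{align*}
a quadratic in $s$. Solving the quadratic gives
\begin{align*}
s = \frac{-(\tilde{x}\cdot\tilde{v}) \pm \big[(\tilde{x}\cdot\tilde{v})^2 - \|\tilde{v}\|^2(\|\tilde{x}\|^2-\varepsilon^2)\big]^{1/2}}{\|\tilde{v}\|^2},
\end{align*}
and after factoring $\|\tilde{v}\|$ out of the square-root discriminant and writing $\widehat{\tilde{v}}=\tilde{v}/\|\tilde{v}\|$, this becomes $s = \tfrac{1}{\|\tilde{v}\|}\big\{-\tilde{x}\cdot\widehat{\tilde{v}} \pm [(\tilde{x}\cdot\widehat{\tilde{v}})^2-(\|\tilde{x}\|^2-\varepsilon^2)]^{1/2}\big\}$. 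Inverting the substitution via $e^{-\gamma t}=1-\gamma s$ and $t=-\tfrac{1}{\gamma}\log(1-\gamma s)$ then produces the stated logarithmic formulas, with the $+$ branch of the discriminant corresponding to the post-collisional cone $C^+(X)$ and the $-$ branch to the pre-collisional cone $C^-(X)$.

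The main obstacle is not the algebra but the \emph{branch selection and root admissibility}: I must justify which sign of the square root yields the \emph{first} contact time consistent with the definitions of $C^+(X)$ and $C^-(X)$, and verify that the logarithm's argument is positive (equivalently $\gamma s<1$) so that $\tau(X,V)$ is real. For $V\in C^+(X)$ the relevant contact lies on $L^-$ (the past half-line), forcing $t<0$ and hence $s<0$; since $\|\tilde{v}\|s<0$ one needs the smaller root, giving the $+$ sign inside the brace, and this is where the sign condition $\tilde{x}\cdot\tilde{v}>0$ emerges as the requirement that the bracket be chosen so that $s<0$ is attained. For $V\in C^-(X)$ the contact lies on $L^+$, so $t>0$, $s>0$, selecting the other root and yielding the $-$ sign; the two-sided bound $-\tfrac{1}{2}(\gamma(\|\tilde{x}\|^2-\varepsilon^2)+\|\tilde{v}\|^2/\gamma)<\tilde{x}\cdot\tilde{v}<0$ then encodes exactly the conditions that the discriminant be nonnegative (real contact) and that the contact be reached before the relative trajectory asymptotes as $t\to\infty$ (where $s\to 1/\gamma$). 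Carefully translating each geometric cone condition into the corresponding inequality on $\tilde{x}\cdot\tilde{v}$, and checking that only one root falls in the admissible range of $s$, is the delicate part; the remainder is a direct substitution.
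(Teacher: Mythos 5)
Your reduction is the same as the paper's: cancel $G=[g,g]$ in the relative coordinate, impose $\|\tilde{x}+s\,\tilde{v}\|^2=\varepsilon^2$ with $s=\tfrac{1}{\gamma}(1-e^{-\gamma t})$, solve the quadratic, and invert via $t=-\tfrac{1}{\gamma}\log(1-\gamma s)$. The genuine gap is in the step you yourself flag as the delicate one: root selection. Your criterion is the sign of $s$ (``since $s<0$ one needs the smaller root'' for $C^+(X)$, and ``$s>0$, selecting the other root'' for $C^-(X)$), but this criterion cannot select anything: on $\mathcal{P}$ the product of the two roots equals $(\|\tilde{x}\|^2-\varepsilon^2)/\|\tilde{v}\|^2\ge 0$, so whenever both roots exist they have the \emph{same} sign. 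The correct criterion is first contact along the trajectory. Since $t\mapsto s$ is strictly increasing (onto $(-\infty,1/\gamma)$), for $V\in C^-(X)$ the collision is the \emph{smaller} of the two nonnegative roots (the entry point of the line into the ball of radius $\varepsilon$), which after inversion is the formula with the $+$ sign inside the brace; for $V\in C^+(X)$ it is the \emph{larger} of the two nonpositive roots (the most recent past contact), giving the $-$ sign. This is the opposite of the assignment you make, so your argument, carried out correctly, contradicts the formulas you claim to recover. (You have in effect reproduced a slip in the paper's own proof: the paper asserts $\tau^-<\tau^+$, but since $s\mapsto-\tfrac{1}{\gamma}\log(1-\gamma s)$ is increasing, the ordering is $\tau^+\le\tau^-$, and the paper's branch selection is likewise backwards.)

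A concrete check exposes the problem: take $\gamma=1$, $\varepsilon=1$, $\tilde{x}=(2,0,0)$, $\tilde{v}=(-2,0,0)$, a head-on approach with $V\in C^-(X)$; the true collision time solves $2e^{-t}=1$, i.e.\ $\tau=\log 2$. Here $\tilde{x}\cdot\widehat{\tilde{v}}=-2$ and the discriminant equals $1$, so the $+$-inside formula gives $-\log\bigl(1+\tfrac{1}{2}(-2+1)\bigr)=\log 2$, which is correct, whereas the $-$-inside formula you assign to $C^-(X)$ has logarithm argument $1+\tfrac{1}{2}(-2-1)=-\tfrac{1}{2}<0$ and is undefined. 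The same example defeats your reading of the two-sided bound: you claim it encodes ``exactly'' nonnegativity of the discriminant plus reachability before the asymptote $s\to 1/\gamma$, yet here $\tilde{x}\cdot\tilde{v}=-4$ violates the stated lower bound $-\tfrac{1}{2}\bigl(\gamma(\|\tilde{x}\|^2-\varepsilon^2)+\|\tilde{v}\|^2/\gamma\bigr)=-\tfrac{7}{2}$ even though a collision occurs. The actual reachability condition for the correct (entry) root, namely $s<1/\gamma$, unwinds to the disjunction ``$|\tilde{x}\cdot\widehat{\tilde{v}}|\le\|\tilde{v}\|/\gamma$ \emph{or} $\tilde{x}\cdot\tilde{v}<-\tfrac{1}{2}\bigl(\gamma(\|\tilde{x}\|^2-\varepsilon^2)+\|\tilde{v}\|^2/\gamma\bigr)$,'' not a two-sided bound. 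So the delicate part is not merely left unfinished: the selection principle you propose is a non sequitur, and completing the computation correctly yields conclusions that differ from those you state.
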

	\begin{proof}
		The collision occurs when
		\begin{align*}
		\|\tilde{\Phi}_t^x(X,V)\|^2 = \varepsilon^2
		\implies
		\|\tilde{x}+\frac{1}{\gamma}(\tilde{v})(1-e^{-\gamma \tau})\|=\varepsilon^2.
		\end{align*}
		By expanding the left hand side and rearranging, we have that if the particles collide, then $\tau(X,V)$ must take one of the two following values
		\begin{align*}
		\tau^-(X,V)=&
		-\frac{1}{\gamma}\log\left(
		1+\frac{\gamma}{\|\tilde{v}\|}\left\{
		\tilde{x}\cdot\widehat{\tilde{v}}
		-
		\left[
		(\tilde{x}\cdot\widehat{\tilde{v}})^2
		-
		(\|\tilde{x}\|^2-\varepsilon^2)
		\right]^{\frac{1}{2}}
		\right\}
		\right),\\
		\tau^+(X,V)=&-\frac{1}{\gamma}\log\left(
		1+\frac{\gamma}{\|\tilde{v}\|}\left\{
		\tilde{x}\cdot\widehat{\tilde{v}}
		+
		\left[
		(\tilde{x}\cdot\widehat{\tilde{v}})^2
		-
		(\|\tilde{x}\|^2-\varepsilon^2)
		\right]^{\frac{1}{2}}
		\right\}
		\right).
		\end{align*}
		We note that the argument in the square root requires $\tilde{x}\cdot\hat{\tilde{v}}<-(\|\tilde{x}\|^2-\varepsilon^2)^{\frac{1}{2}}$ or  $\tilde{x}\cdot\hat{\tilde{v}}>(\|\tilde{x}\|^2-\varepsilon^2)^{\frac{1}{2}}$, and if both $\tau^\pm(X,V)$ exist, then $\tau^-(X,V)<\tau^+(X,V)$. Assume now that $V\in C^+(X)$, then $\tau^\pm(X,V)<0$, and so $\tau^+(X,V)>\tau^-(X,V)$, so $\tau(X,V)=\tau^+(X,V)$ is required. Furthermore,
		\begin{align*}
		\tilde{x}\cdot\tilde{v}>\frac{\|\tilde{v}\|}{\gamma} + ((\tilde{x}\cdot\tilde{v})^2-(\|\tilde{x}\|^2-\varepsilon^2))^{\frac{1}{2}}>0.
		\end{align*}
		and so $\tilde{x}\cdot\tilde{v}>0$. Alternatively, if $V\in C^-(X)$, $\tau(X,V)>0$ and so we take $\tau(X,V)=\tau^-(X,V)$. Thus
		\begin{align*}
		\tilde{x}\cdot\hat{\tilde{v}}+\frac{\|\tilde{v}\|}{\gamma}
		>((\tilde{x}\cdot\hat{\tilde{v}})^2-(\|\tilde{x}\|^2-\varepsilon^2))^{\frac{1}{2}},
		\end{align*}
		Squaring both sides and rearranging, we find
		\begin{align*}
		\tilde{x}\cdot\tilde{v}>-\frac{1}{2}\left(\gamma(\|\tilde{x}\|^2-\varepsilon^2)+\frac{\|\tilde{v}\|}{\gamma}\right).
		\end{align*}
	\end{proof}
	The inequality in the pre-collisional case relates to the presence of friction in the dynamics; if particles do not have enough energy in the direction $\tilde{x}$ then the particles will never meet.
	
	\subsection{Flow maps}
	Given the free particle dynamics, the scattering maps, and a full characterisation of the admissible data, we are now in a position to define the hard sphere flow maps $T_t$. We split into collision free and collisional dynamics.
	
	\paragraph{Collision free Dynamics} If $(X,V)\in\{X\}\times\mathbb{R}^6\backslash\mathcal{C}(X)$, then
	\begin{align}
	(\Pi_1\otimes T_t)Z =&  X -\frac{gt}{\gamma} + \frac{1}{\gamma}\left(
	V + \frac{g}{\gamma}
	\right)(1-e^{-\gamma t})\\
	(\Pi_2\otimes T_t)Z =& -\frac{g}{\gamma} + \left(V+\frac{g}{\gamma}\right)e^{-\gamma t}.
	\end{align}
	
	\paragraph{Collisional Dynamics} Firstly, if $(X,V)\in \{X\}\times C^-(X)$, then
	\begin{align}
	(\Pi_1\otimes T_t)Z =&  
	\begin{cases}
	X -\frac{gt}{\gamma} + \frac{1}{\gamma}\left(
	V + \frac{g}{\gamma}
	\right)(1-e^{-\gamma t})
	, \quad\text{ if } -\infty<t<\tau(X,V)\\
	\begin{aligned}[c]&
	\left[
	X -\frac{g\tau}{\gamma} + \frac{1}{\gamma}\left(
	V + \frac{g}{\gamma}
	\right)(1-e^{-\gamma \tau})
	\right]
	-\frac{g(t-\tau)}{\gamma}\\&\qquad
	+\frac{1}{\gamma}
	\left[
	\sigma_-(X,V)
	\left(
	(V+\frac{g}{\gamma})e^{-\gamma\tau}-\frac{g}{\gamma}
	\right)
	+\frac{g}{\gamma}
	\right](1-e^{-\gamma(t-\tau)})
	\end{aligned}
	, \quad\text{ if } \tau(X,V)<t<\infty\\
	\end{cases}
	\end{align}
	and
	\begin{align}
	(\Pi_2\otimes T_t)Z =&  
	\begin{cases}
	-\frac{g}{\gamma} + \left(V+\frac{g}{\gamma}\right)e^{-\gamma t}
	, \quad\text{ if } -\infty<t<\tau(X,V)\\
	-\frac{g}{\gamma}+\left[
	\sigma_-(X,V)\left(
	(V+\frac{g}{\gamma})e^{-\gamma\tau}-\frac{g}{\gamma}
	\right)
	+\frac{g}{\gamma}
	\right]
	e^{-\gamma(t-\tau)}
	, \quad\text{ if } \tau(X,V)<t<\infty\\
	\end{cases}
	\end{align}
	If $(X,V)\in \{X\}\times C^+(X)$, then
	\begin{align}
	(\Pi_1\otimes T_t)Z =&  
	\begin{cases}
	\begin{aligned}[c]&
	\left[
	X -\frac{g\tau}{\gamma} + \frac{1}{\gamma}\left(
	V + \frac{g}{\gamma}
	\right)(1-e^{-\gamma \tau})
	\right]
	-\frac{g(t-\tau)}{\gamma}\\&\qquad
	+\frac{1}{\gamma}
	\left[
	\sigma_+(X,V)
	\left(
	(V+\frac{g}{\gamma})e^{-\gamma\tau}-\frac{g}{\gamma}
	\right)
	+\frac{g}{\gamma}
	\right](1-e^{-\gamma(t-\tau)})
	\end{aligned}
	, \quad\text{ if } -\infty<t<\tau(X,V)\\
	X -\frac{gt}{\gamma} + \frac{1}{\gamma}\left(
	V + \frac{g}{\gamma}
	\right)(1-e^{-\gamma t})
	, \quad\text{ if } \tau(X,V)<t<\infty\\
	\end{cases}
	\end{align}
	and
	\begin{align}
	(\Pi_2\otimes T_t)Z =&  
	\begin{cases}
	-\frac{g}{\gamma}+\left[
	\sigma_+(X,V)\left(
	(V+\frac{g}{\gamma})e^{-\gamma\tau}-\frac{g}{\gamma}
	\right)
	+\frac{g}{\gamma}
	\right]
	e^{-\gamma(t-\tau)}
	, \quad\text{ if }-\infty<t<\tau(X,V)\\
	-\frac{g}{\gamma} + \left(V+\frac{g}{\gamma}\right)e^{-\gamma t}
	, \quad\text{ if } \tau(X,V)<t<\infty\\
	\end{cases}
	\end{align}
	Finally, if $X\in\partial\mathcal{P}$
	\begin{align}
	(\Pi_1\otimes T_t)Z =&  
	\begin{cases}
	X -\frac{gt}{\gamma} + \frac{1}{\gamma}\left(
	V + \frac{g}{\gamma}
	\right)(1-e^{-\gamma t})
	, \quad\text{ if } -\infty<t<0\\
	X -\frac{gt}{\gamma} + \frac{1}{\gamma}\left(
	\sigma_-(X,V)V + \frac{g}{\gamma}
	\right)(1-e^{-\gamma t})
	, \quad\text{ if } 0<t<\infty\\
	\end{cases}
	\end{align}
	and
	\begin{align}
	(\Pi_2\otimes T_t)Z =&  
	\begin{cases}
	-\frac{g}{\gamma} + \left(V+\frac{g}{\gamma}\right)e^{-\gamma t}
	, \quad\text{ if } -\infty<t<0\\
	-\frac{g}{\gamma} + \left(\sigma_-(X,V)V+\frac{g}{\gamma}\right)e^{-\gamma t}
	, \quad\text{ if } 0<t<\infty.\\
	\end{cases}
	\end{align}
	
	\subsection{The Liouville equation}
	Using \cref{Theorem:main}, as we understand the admissible data $X,V$ for the particle dynamics, we can write down the Liouville equation for these particular dynamics. For any $f_0^{(2)}\in C^0(\mathcal{D})\cap L^1(\mathcal{D})$, there exists a physical global in time weak solution of
	\begin{align}
	\left[\frac{\partial}{\partial t}+V\cdot\nabla_X - G\cdot\nabla_V -\nabla_V\cdot(\gamma V)\right]f^{(2)}(X,V,t)
	=
	C_X[f^{(2)}],
	\end{align}
	where the scattering map defining collisions satisfies \cref{eq:MongeAmpere}.
	
	\subsection{The BBGKY hierarchy}
	We define $f^{(1)}(x,v,t)$ as in \Cref{Cor:BBGKY}, which then satisfies (by using $G=[g,g]$),
	\begin{align*}
	&\int_{\mathcal{P}}\int_{\mathbb{R}^6}\int_{-\infty}^{\infty}
	(\partial_t + v\cdot\nabla_x +g\cdot\nabla_v)\phi(x,v,t)f^{(1)}(x,v,t)\diff t\diff V\diff X\\
	=&
	-\frac{1}{\sqrt{2}}\int_{\mathbb{R}^3}\int_{\mathbb{S}^2}\int_{\mathbb{R}^6}\int_{-\infty}^{\infty}
	\phi(x,v,t)f^{(2)}([x,x+\varepsilon n],[v,\bar{v}],t)(v-\bar{v})\cdot n\diff t \diff V\diff n \diff x.
	\end{align*}
	The collisional term $C[f^{(2)}]$ in the above equation can then be separated into a pre-collisional and a post-collisional term:
	\begin{gather*}
	C[f^{(2)}] = \frac{1}{\sqrt{2}}\int_{\mathbb{R}^3}\int_{\mathbb{S}^2}\int_{C^-(X)}\int_{-\infty}^{\infty}
	\phi(x,v,t)f^{(2)}([x,x+\varepsilon n],[v,\bar{v}],t)(v-\bar{v})\cdot n\diff t \diff V\diff n \diff x\\
	\frac{1}{\sqrt{2}}\int_{\mathbb{R}^3}\int_{\mathbb{S}^2}\int_{C^+(X)}\int_{-\infty}^{\infty}
	\phi(x,v,t)f^{(2)}([x,x+\varepsilon n],[v,\bar{v}],t)(v-\bar{v})\cdot n\diff t \diff V\diff n \diff x,\\
	\end{gather*}
	where
	\begin{gather*}
	C^+(n) = \{ V\in\mathbb{R}^6:(v-\bar{v})\cdot n >0\}
	\end{gather*}
	and
	\begin{align*}
	C^-(n) =& \{ V=[v,\bar{v}]\in\mathbb{R}^6:-\frac{\|v-\bar{v}\|^2}{2\gamma}<(v-\bar{v})\cdot n <0\}\\
	=& \{ V=[v,\bar{v}]\in\mathbb{R}^6:(v-\bar{v})\cdot n <0\},\\
	\end{align*}
	where the lower bound has disappeared because we are considering $X\in\partial\mathcal{P}$. We introduce the change of variables for the post-collisional integral that is motivated by the backward time scattering map:
	\begin{align*}
	V\mapsto \left( 
	I - \frac{1+\alpha}{2\alpha}\begin{Bmatrix}
	n \\ -n
	\end{Bmatrix}
	\otimes
	\begin{Bmatrix}
	n \\ -n
	\end{Bmatrix}
	\right)V.
	\end{align*}
	This transform has Jacobian $-1/\alpha$. We note that
	\begin{align*}
	(v'_n-\bar{v}_n')\cdot n = -\frac{1}{\alpha}(v-\bar{v})\cdot n,
	\end{align*}
	where the primed values are determined by the backward time Boltzmann inelastic scattering map, which in terms of $v,\bar{v}$ is the inverse inelastic collision rule
	\begin{align}
	v'_n       = v-\frac{1+\alpha}{2\alpha}(n\cdot(v-\bar{v}))n,\\
	\bar{v}_n' = v+\frac{1+\alpha}{2\alpha}(n\cdot(v-\bar{v}))n,
	\end{align}
	and so we obtain the inelastic collision operator in the first equation of the BBGKY hierarchy:
	\begin{align}
	&\int_{\partial\mathcal{P}}\int_{\mathbb{R}^6}\int_{-\infty}^{\infty}\Phi(X,V,t)F(X,V,t) V\cdot\tilde{\nu}(Y)dtdVdY,\\
	=&\frac{1}{\sqrt{2}}
	\int_{\mathbb{R}^3}\int_{\mathbb{S}^2}\int_{C^-(n)}\int_{-\infty}^{\infty}\Phi(x,v,y)\Bigg[
	F^{(2)}(y,v,y+\varepsilon n,\bar{v},t)\\&\qquad \qquad \qquad
	-\frac{1}{\alpha^2}F^{(2)}(y,v_n',y+\varepsilon n,\bar{v}'_n,t)
	\Bigg](v-\bar{v})\cdot n \diff t\diff V\diff n \diff x.
	\end{align}
	We note that, upon additional assumptions ({\em i.e.} molecular chaos), this is a weak analogue of the inelastic Boltzmann collision operator.
	\section{Conclusions and future work}\label{Sec:Conclusions}
	
	In this paper we have presented a weak formulation for particles under the influence of a general dynamical form, with general instantaneous interactions. We have an example which is consistent with results in the literature on inelastic collision operators in the BBGKY hierarchy, and generalizes the Liouville equation. This shows the potential of the results for more complicated particle systems.
	
	In the future, we wish to extend these results to include interactions modelled by general step potentials, and investigate the entire space of admissible initial data. From here we can consider homogenisation procedures to construct an effective potential term for particles with interactions based on step potentials, which can then be incorporated into a modern computational model.
	
	The results presented could also be extended to systems with additional degrees of freedom, for example rotation or `colour'. Furthermore, by investigating systems of many particles we aim to see what initial configurations can be modelled by the Liouville equation presented here. To do so, we need to consider many body interactions and the effect of inelastic collapse. 
	
	\section{Acknowledgements}
	TH was supported by The Maxwell Institute Graduate School in Analysis and its
	Applications, a Centre for Doctoral Training funded by the EPSRC (EP/L016508/01), the Scottish Funding Council, Heriot-Watt University and the University of Edinburgh.  MW is supported by the EPSRC Standard Grant EP/P011543/1. BDG would like to acknowledge support from EPSRC EP/L025159/1.
	
	\appendix	
	\section{Tensor Notation}
	In this paper, we also use tensor product in several calculations. Given $A=(a_{ij})_{i=1,...,n, j=1,...,m}\in\mathbb{R}^{n\times m}$ and $B=(b_{ij})_{i=1,...,k, j=1,...,l}\in\mathbb{R}^{k\times l}$, the tensor product $A\otimes B\in\mathbb{R}^{nk\times ml}$ is given by the following element-wise multiplication:
	\begin{align}
	A\otimes B = 
	\begin{pmatrix}
	a_{11}b_{11} & a_{11}b_{12} & \dots  & a_{11}b_{1l} & \dots & \dots & a_{1m}b_{11} & a_{1m}b_{12} & \dots  & a_{1m}b_{1l} \\
	a_{11}b_{21} & a_{11}b_{22} & \dots  & a_{11}b_{2l} & \dots & \dots & a_{1m}b_{21} & a_{1m}b_{22} & \dots  & a_{1m}b_{2l} \\
	\vdots       & \vdots       & \ddots & \vdots       &       &       &
	\vdots       & \vdots       & \ddots & \vdots       \\
	a_{11}b_{k1} & a_{11}b_{k2} & \dots  & a_{11}b_{kl} & \dots & \dots & a_{1m}b_{k1} & a_{1m}b_{k2} & \dots  & a_{1m}b_{kl} \\
	\vdots       & \vdots       &        & \vdots       & \ddots&       &
	\vdots       & \vdots       &        & \vdots       \\
	\vdots       & \vdots       &        & \vdots       &       & \ddots&
	\vdots       & \vdots       &        & \vdots       \\
	a_{n1}b_{11} & a_{n1}b_{12} & \dots  & a_{n1}b_{1l} & \dots & \dots & a_{nm}b_{11} & a_{nm}b_{12} & \dots  & a_{nm}b_{1l} \\
	a_{n1}b_{21} & a_{n1}b_{22} & \dots  & a_{n1}b_{2l} & \dots & \dots & a_{nm}b_{21} & a_{nm}b_{22} & \dots  & a_{nm}b_{2l} \\
	\vdots       & \vdots       & \ddots & \vdots       &       &       &
	\vdots       & \vdots       & \ddots & \vdots       \\
	a_{n1}b_{k1} & a_{n1}b_{k2} & \dots  & a_{n1}b_{kl} & \dots & \dots & a_{nm}b_{k1} & a_{nm}b_{k2} & \dots  & a_{nm}b_{kl} \\
	\end{pmatrix}
	\end{align}
	We extend the tensor products to vectors $A=(a_i)_{i=1,...,n, }\in\mathbb{R}^{n}$ and $B=(b_i)_{i=1,...,m}\in\mathbb{R}^{m}$ as $A\otimes B\in\mathbb{R}^{n\times m}$ where
	\begin{align}
	A\otimes B = 
	\begin{pmatrix}
	a_1b_1 & a_1b_2 & \dots & a_1b_m \\
	a_2b_1 & a_2b_2 & \dots & a_2b_m \\
	\vdots & \vdots & \ddots& \vdots \\
	a_nb_1 & a_nb_2 & \dots & a_nb_m
	\end{pmatrix}
	\end{align}
	Tensor product notation is particularly useful when considering multidimensional derivatives of vectors in this paper; for $A\in\mathbb{R}^n$ we and a differentiable function $F:A\rightarrow\mathbb{R}^m$,we define $\nabla_AF(A)\in\mathbb{R}^{n\times m}$ as
	\begin{align}
	\nabla_AF(A) =
	\begin{pmatrix}
	\partial_{a_1} & \partial_{a_2} & \dots & \partial_{a_n}
	\end{pmatrix}
	\otimes
	F(A)
	\end{align}
	For example,
	\begin{align}
	\nabla_X\tilde{x} = 
	\begin{pmatrix}
	1 & 0 & 0 \\
	0 & 1 & 0 \\
	0 & 0 & 1 \\
	-1 & 0  & 0 \\   
	0 & -1 & 0 \\
	0 &  0 & -1
	\end{pmatrix}
	=  \begin{pmatrix} 1 \\ -1 \end{pmatrix} \otimes I_3
	\end{align}
	where $I_3\in\mathbb{R}^{3\times3}$ is the identity matrix. Derivations also involve matrix vector products, and to avoid ambiguity, for a vector $a\in\mathbb{R}^n$ and matrix $b\in\mathbb{R}^{n\times n}$, we define $a\cdot B$ and $B\cdot a$ elementwise by
	\begin{align}
	(a\cdot B)_i =\sum_{j=1}^n a_j B_{ji},\nonumber\\
	(B\cdot a)_i =\sum_{j=1}^n a_j B_{ij},
	\end{align}
	for $i=1,...,n$.
	
	\section{One dimensional solutions to the Monge Ampere equations}
	In \cref{subSec:InelasticExample} we introduced an event map constraint in the form of a Monge-Ampere equation \cref{eq:MongeAmpereGeneral}. Under the additional assumption that $\eta(X,V)$ is a constant, we find $\eta^+(X,V) = -(1+\alpha)$ and $\eta^-(X,V)=-\frac{(1+\alpha)}{\alpha}$. It is unclear whether other (non-linear) solutions to \cref{eq:MongeAmpereGeneral} exist. In this appendix we produce a non-linear solution in one dimension. Firstly, in one dimension,
	\begin{align*}
	N(X) = \frac{1}{\sqrt{2}}[1,-1],\quad N(X)\otimes N(X) = \frac{1}{2}\begin{pmatrix} 1 & -1 \\ -1 & 1 \end{pmatrix}.
	\end{align*}
	and so
	\begin{align}
	|D\sigma(X,V)V| = 
	\left|
	\begin{pmatrix}
	1 + \frac{1}{2}(v-\bar{v})\partial_v\eta + \frac{\eta}{2} &
	\frac{1}{2}(v-\bar{v})\partial_{\bar{v}}\eta - \frac{\eta}{2} \\
	-\frac{1}{2}(v-\bar{v})\partial_{v}\eta - \frac{\eta}{2} &
	1 + \frac{1}{2}(v-\bar{v})\partial_{\bar{v}}\eta + \frac{\eta}{2} 
	\end{pmatrix}
	\right|.
	\end{align}
	Thus, after some cancellations we have that
	\begin{align}
	2\eta + (v-\bar{v})(\partial_v-\partial_{\bar{v}})\eta = -2(1+\alpha).
	\end{align}
	Here upon assuming that $\eta$ is constant we see that the unique solution is $\eta=-(1+\alpha)$. If we write $\eta(X,V) = \eta(\tilde{v})$, then
	\begin{align}
	\eta + \tilde{v}\partial_{\tilde{v}}\eta = -(1+\alpha).
	\end{align}
	Then for any $c\in\mathbb{R}$,
	\begin{align}
	\eta = \frac{c}{\tilde{v}} - (1+\alpha),
	\end{align}
	is a solution of \cref{eq:MongeAmpereGeneral}. As we require $\sigma^-(X,V)\sigma^+(X,V)=I$, we have that
	\begin{align}
	\eta^+(X,V) = \frac{c}{\tilde{v}} - (1+\alpha),\quad
	\eta^-(X,V) = \frac{\tilde{v}}{c-\alpha\tilde{v}} -1.
	\end{align}
	We note that if $\tilde{v} = c/\alpha$, $\eta^-(X,V)$ is not defined, so these scattering maps can only be applied on a restricted set of initial data. However, existence of a non-linear solution to \cref{eq:MongeAmpereGeneral} encourages further investigation.
	\bibliography{LiouvilleBib}
	\bibliographystyle{ieeetr}	
	
\end{document}